\def\ps@pprintTitle{%
 \let\@oddhead\@empty
 \let\@evenhead\@empty
 \def\@oddfoot{\centerline{\thepage}}%
 \let\@evenfoot\@oddfoot}
\newtheorem{theorem}{Theorem}
\newenvironment{proofsketch}{%
  \proof}{\endproof}
\newcommand*\circled[1]{\tikz[baseline=(char.base)]{
            \node[shape=circle,draw,inner sep=2pt] (char) {#1};}}
\begin{document}

\begin{frontmatter}



\title{Alignment Problems With Current Forecasting Platforms}


\author[1]{Nu\~{n}o Sempere\corref{cor1}}
\author[2]{Alex Lawsen}

\address[1]{Quantified Uncertainty Research Institute, Vienna}
\address[2]{Kings Maths School, United Kingdom}

\cortext[cor1]{Corresponding author. E-mail address: \url{nuno@quantifieduncertainty.org}}

\begin{abstract}
    We present alignment problems in current forecasting platforms, such as Good Judgment Open, CSET-Foretell or Metaculus. We classify those problems as either reward specification problems or principal-agent problems, and we propose solutions. For instance, the scoring rule used by Good Judgment Open is not proper, and Metaculus tournaments disincentivize sharing information and incentivize distorting one's true probabilities to maximize the chances of placing in the top few positions which earn a monetary reward. We also point out some partial similarities between the problem of aligning forecasters and the problem of aligning artificial intelligence systems.
\end{abstract}




\begin{keyword}
forecasting \sep forecasting tournament \sep incentives \sep incentive problems \sep alignment problems \sep Good Judgement \sep Cultivate Labs \sep Metaculus \sep CSET-foretell



\end{keyword}

\end{frontmatter}



\section{Introduction}
\label{Introduction}

\subsection{Motivation: The importance of alignment problems for the forecasting ecosystem}\label{Motivation}

Forecasting systems and competitions such as those organized by Good Judgment, Metaculus or CSET-foretell have been used to inform probabilities of nuclear war (\cite{Rodriguez}), the probability of different coronavirus scenarios (\cite{covidrecovery}), the chances of each presidential candidate winning the US election (\cite{USelections}), the probability of heightened geopolitical tensions or conflict with China (\cite{SouthChina}), the likelihood of global catastrophic risk (\cite{ragnarok}), or the rates of AI progress (\cite{aiprogress}). Further, these probabilities aim to be action guiding, that is, to be accurate and respected enough to influence real world decisions:

\begin{quote}
    ``...the tools are good enough that the remarkably inexpensive forecasts they generate should be on the desks of decision makers, including the president of the United States" (\cite{Tetlock}, Chapter 9)
\end{quote}

However, these forecasting competitions sometimes inadvertently provide forecasters with incentives not to reveal their best forecasts. As as a highlight of the paper, in \S\ref{GJ not proper}, we prove that the scoring rule used in Good Judgement Open, CSET-foretell, and other Cultivate Labs platforms is not proper. That is, in some scenarios, a forecaster with correct probabilistic beliefs can input a much higher probability than would reflect their true beliefs into the platform, and still obtain a better score in expectation, and this effect is quantitatively large. 

Notably, Good Judgement draws from the top 2\% of forecasters from Good Judgement Open, who are then dubbed Superforecasters™, which introduces further incentive distortions.

The incentive problems we identify are not only problematic because score-focused forecasters might exploit them, but also because platform users might interpret imperfect reward schemes as feedback, and because the flawed incentive schemes will fail to incentivize, identify and reward the best forecasters.

For an overview of the broader literature around incentives for forecasters, see the recent literature review in (\cite{Witkowski}). 


\subsection{Overview of the paper}

Section \S\ref{Motivation} provides the motivation for this paper, and section \S\ref{Alignment terminology} introduces the alignment terminology we use as an organizing principle for the rest of the paper. Sections \S\ref{Outer alignment incentive Problems} and \S\ref{Inner alignment incentive Problems} outline the alignment problems which we identify, and Appendix \S\ref{Simulations} provides some numerical simulations to support our points. In section \S\ref{Solutions} we propose possible solutions. Section \S\ref{Conclusion} concludes, and draws parallels to the broader artificial intelligence alignment problem. 
\newpage

\newpage
\subsection{Alignment terminology}\label{Alignment terminology}

The process of creating a forecasting system comprises several distinct steps: 

\begin{enumerate}
    \item Broader society, with its own goals, spawns a forecasting system, whose goals are to obtain accurate probabilities.
    \item The forecasting system chooses a formal objective function which operationalizes ``obtain accurate probabilities".
    \item Forecasters then maximize their actual reward, which might depend on their own preferences and inclinations.
\end{enumerate}

\begin{wrapfigure}{l}{0.4\textwidth}
    \centering
    \includegraphics[width=0.35\textwidth]{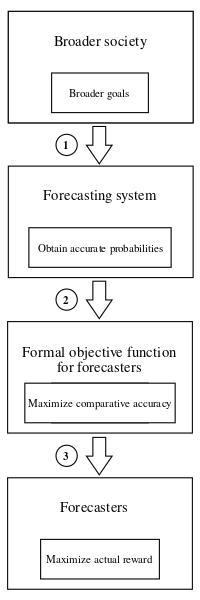}
\end{wrapfigure}

An alignment failure is possible at each of those steps. For instance, a failure in the first step, where broader society's goals are not aligned with a forecasting system's goals, might correspond to obtaining forecasts about Ebola without being cognizant that different forecasts may provoke different responses, and hence lead to a different number of deaths, and that the prediction which minimizes deaths might not be the one which is most accurate. A failure in the second step, where the goals of the forecasting system are not aligned with its scoring rule, might correspond to choosing a reward function which, when maximized, doesn't result in optimal forecasts, for example, a function which rewards successfully convincing other forecasters of false information: this maximizes the liar's comparative accuracy, but might decrease the accuracy of the broader scoring system. A failure in the third step, where the goals of the forecasters are not aligned with the goals of the formal scoring rule might correspond to a bug in the implementation of the scoring rule, so that forecasters optimize their score of the unaligned buggy implementation.

\cite{Hubinger} distinguish between inner and outer alignment. Outer alignment refers to choosing an objective function which to optimize, and making sure that the gap between this objective function and one's own true goals is as small as possible. This has some similarities to our first two steps above: making sure that we want to optimize for obtaining accurate probabilities, and making sure that our scoring rule leads to accurate probabilities. For instance, if we care about saving lives, we would have to check that having better forecasts leads to more lives saved, and that the scoring rule, if optimized, leads to more accurate forecasts.

Inner alignment then refers to making sure that an optimizer is trying to optimize a previously chosen base objective function, when this optimizer is also being optimized for. Normally this refers to making sure that the reward function learnt by a reinforcement learner during training closely approximates a base objective function, even in a different environment outside of training. In the case of a forecasting system, this bears some resemblance to making sure that forecasters maximize their score according to a previously chosen scoring rule, for instance by paying forecasters more money the lower their Brier score is.

Nonetheless, human forecasters have some differences with reinforcement learning systems, in that they are not quite trained by forecasting systems from scratch, but merely repurposed or incentivized. For this reason, it is perhaps more accurate to say that the first two steps are a akin to a reward specification problem, and the last step is akin to a principal-agent problem. 

The focus of this paper is on failures for forecasting platforms related to steps \circled{2} and \circled{3}, respectively covered in sections \S\ref{Outer alignment incentive Problems} and \S\ref{Inner alignment incentive Problems}. 


\section{Reward specification problems $\approx$ Outer alignment problems} 
\label{Outer alignment incentive Problems}

This question discusses cases where, as individual forecasters maximize their comparative accuracy score\footnote{e.g., their relative Brier score}, or some other score determined by a forecasting platform, problems arise where this does not maximize the accuracy of the whole forecasting system. That is, the forecasting platform chose some reward function, forecasters are optimizing that reward function, but it turns out that the reward function fails to capture some aspect which the creators or clients of the forecasting platform also care about.

An classical example of this would be a Sybil attack: if a forecaster created many puppet accounts on Good Judgment Open, CSET or Metaculus, and used them to make many bad forecasts on the questions she had made predictions on, her comparative accuracy score (and the comparative accuracy score of those who predicted the same questions as her) would increase. 

As long as set of questions she choose to forecast on were relatively unique, she would benefit. She would still be maximizing her comparative accuracy score, but that score would have ceased to be related to the broader objectives of the forecasting system.

\subsection{The scoring rule incentivizes forecasters to selectively pick questions}
\label{Selectively pick questions}

Forecasters seeking to obtain a good score are incentivized to selectively pick questions, and in some cases, are better off not making predictions \textit{even if they know the true probability exactly.} Tetlock mentions a related point in one of his ``Commandments for superforecasters'' (\cite{Tetlock}, pp. 277-278): ``Focus on questions where your hard work is likely to pay off''. Yet if we care about making better predictions of things we need to know the answer to, the skill of ``trying to answer easier questions so one scores better" is not a skill we should reward, let alone encourage the development of.

For the case where the forecasting system rewards the Brier score\footnote{In Brier Scoring, lower scores are better}, if a forecaster has a brier score $p \cdot (1-p)$, then they should not make a prediction on any question where the probability is between $p$ and $(1-p)$, \textit{even if they know the true probability exactly.} 

\begin{theorem}
    A forecaster wishing to obtain a low average Brier score, and who has\footnote{Either currently, or in expectation} a Brier score of $p \cdot (1-p)$ (with $p\le(1-p)$ without loss of generality) should only make predictions in questions where the probability is lower than $p$ or higher than $(1-p)$.
\end{theorem}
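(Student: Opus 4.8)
The plan is to reduce the statement to a one-dimensional inequality about the parabola $g(x) = x(1-x)$ and then exploit its symmetry about $x = 1/2$.

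First I would pin down the best a forecaster can do on a single question. For a binary question that resolves positively with true probability $\pi$, reporting $q$ yields expected Brier score $\pi(1-q)^2 + (1-\pi)q^2$, which as a function of $q$ is a strictly convex quadratic minimized at $q = \pi$ with minimal value $\pi(1-\pi)$. In particular, even a forecaster who knows $\pi$ exactly cannot push their expected contribution from that question below $g(\pi) = \pi(1-\pi)$; this is just the properness of the Brier score, which I would re-derive in the one line above rather than cite.

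Second, I would formalize the averaging step. Write $A = p(1-p)$ for the current (or expected) average Brier score over the $n$ questions already answered. Answering one further question whose honest expected score is $t$ changes the average to $(nA + t)/(n+1)$, and a direct computation shows this is strictly smaller than $A$ precisely when $t < A$, with equality when $t = A$. Combining with the previous step, adding the new question can only lower the forecaster's average Brier score when $g(\pi) = \pi(1-\pi) < p(1-p) = g(p)$, with indifference at equality.

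Finally, I would solve $g(\pi) < g(p)$. The function $g$ is strictly increasing on $[0, 1/2]$, strictly decreasing on $[1/2, 1]$, and symmetric in the sense that $g(x) = g(1-x)$; since $p \le 1-p$ forces $p \le 1/2$, the sublevel set $\{\, x : g(x) < g(p)\,\}$ is exactly $[0,p) \cup (1-p, 1]$. Hence making the prediction strictly helps only for questions with true probability below $p$ or above $1-p$, which is the claim. I do not expect a serious obstacle here: the one point needing care is the justification that, conditional on predicting at all, the forecaster's best expected score on a question is $g(\pi)$ and not something smaller — i.e.\ that truthful reporting is optimal — so that the ``even if they know the true probability exactly'' hypothesis is exactly what makes the bound $g(\pi)$ tight and the argument go through; a secondary subtlety is merely being consistent about whether ``Brier score'' denotes the realized average or its expectation, as flagged in the theorem's footnote.
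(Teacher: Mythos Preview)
Your proposal is correct and follows essentially the same route as the paper: compute the optimal expected Brier score on a question with true probability $\pi$ as $\pi(1-\pi)$, then compare it to the forecaster's current average $p(1-p)$ via the monotonicity and symmetry of $x \mapsto x(1-x)$. You are more careful than the paper in two places---you explicitly derive that truthful reporting minimizes the expected score (the paper simply plugs in $q$ as the report) and you spell out the averaging step $(nA+t)/(n+1) < A \iff t < A$---but these are refinements of the same argument, not a different one.
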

\begin{proof}
    Suppose that the forecaster makes a prediction on a binary question which has a probability $q\le 0.5$ of resolving positively, with $p < q < (1-p)$. Then the expected Brier score is:

    \begin{equation}
        E[Score] = q \cdot (1-q)^2 + (1-q) \cdot (0-q)^2 = q \cdot (1-q) \cdot (1-q + q) = q \cdot (1-q)
    \end{equation}

    As $f(x) = x \cdot (1-x)$ is strictly increasing from $0$ to $0.5$, if $p<q$, then $p \cdot (1-p )< q \cdot (1-q)$. Hence, their expected score on the question is greater than their current score, and they should not predict. For the case where $q>0.5$, consider that the question has a probability $q'$ of resolving negatively and swap $p$ and $(1-p)$.

\end{proof}

In fact, if a forecaster has so far performed better than random guessing, there exists a range of probabilities which, if used, are guaranteed to hurt their score not only in expectation, but regardless of the outcome of the event. This is stated formally as follows.

\begin{theorem}
    A forecaster with brier score $b^2$ who forecasts that an event has probability $p$ is guaranteed to end up with a worse brier score, regardless of the outcome of the event, if $b<p<(1-b)$.
\end{theorem}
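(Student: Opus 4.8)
The plan is to reduce the claim to two elementary inequalities, by computing the Brier score contributed by the new question in each of its two possible resolutions and then invoking the trivial fact that appending to a running average any value strictly larger than that average strictly increases it.

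First I would fix notation: let $b^2$ denote the forecaster's current average Brier score over some collection of $n$ questions, and recall (as in the previous theorem) that a forecast of probability $p$ on a fresh binary question contributes a squared-error term of $(1-p)^2$ if the event resolves positively and $p^2$ if it resolves negatively. The updated average is $\frac{n b^2 + s}{n+1}$, where $s \in \{(1-p)^2,\, p^2\}$ is the realized contribution, and a one-line computation gives $\frac{n b^2 + s}{n+1} - b^2 = \frac{s - b^2}{n+1}$, so the average strictly worsens exactly when $s > b^2$.

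Next I would check that both possible values of $s$ exceed $b^2$ under the hypothesis $b < p < 1-b$. Since all quantities are nonnegative and $0 < p < 1$, taking square roots gives $p^2 > b^2 \iff p > b$ and $(1-p)^2 > b^2 \iff 1-p > b \iff p < 1-b$. Thus $b < p < 1-b$ is precisely the conjunction of the two inequalities needed, so whichever way the event resolves, the question's contribution strictly exceeds $b^2$ and hence the average strictly increases. I would close by noting that the interval $(b, 1-b)$ is nonempty iff $b < 1/2$, i.e.\ iff the current score $b^2$ beats the $1/4$ of always guessing $1/2$ — this is the ``better than random guessing'' hypothesis in the surrounding text.

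I do not expect a genuine obstacle: the argument is a short computation. The only point that needs care is definitional — making explicit that ``worse Brier score'' refers to the cumulative/average score rather than the single-question score, together with the one-line lemma that inserting an above-average term raises an average; once that is pinned down, the rest is immediate.
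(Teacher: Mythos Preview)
Your proof is correct and takes essentially the same approach as the paper's: both arguments show that each possible single-question Brier contribution exceeds $b^2$, with the paper organizing this via a case split on $p\lessgtr 0.5$ (so that only the smaller of $p^2$ and $(1-p)^2$ need be checked) while you verify both inequalities directly from the two halves of the hypothesis $b<p<1-b$. Your explicit averaging lemma is a welcome clarification that the paper leaves implicit.
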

\begin{proof}
      If $p<0.5$, the best score the player can achieve is $p^2$, when the event does not occur. But $p^2>b^2$ as $p>b$, so the player's score will be worse.

   If $p>0.5$, the best score the player can achieve is $(1-p)^2$, when the event occurs. 
\begin{equation}
\begin{split}   
(1-p)>1-(1-b)\\
\implies(1-p)>b\\
\implies(1-p)^2>b^2\\
\end{split}
\end{equation}  
   Hence, in this case the player's score will also be worse.
    
\end{proof}

Some competitions don't reward the Brier score, but the relative Brier score (or some combination of both). The relative Brier score is defined as the difference between the forecaster's Brier scores, and the aggregate's Brier scores. As before, a lower score is better. 

As in the previous case, forecasters should sometimes not predict on some questions, even if they know the probability exactly. This is not necessarily a problem, as it might lead to better allocation of the forecaster's attention, but can be. 

\begin{theorem}
    A forecaster seeking to obtain a low average relative Brier score score, and who has\footnote{Again, either currently or in expectation} a relative Brier score of $r$, should only make predictions in questions where:
\begin{equation}
    E[\textnormal{the forecaster's Brier score}] - E[\textnormal{Brier score of the aggregate}] < r
\end{equation}
\end{theorem}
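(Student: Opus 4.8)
The plan is to mimic the structure of the first theorem: reduce the claim to the elementary fact that appending a new value to a running average decreases that average if and only if the new value lies below the current average, and then identify the "new value" with the per-question relative Brier score contributed by making the prediction. First I would fix notation: write the forecaster's average relative Brier score over the questions answered so far as $r$, and let $X$ denote the (random, from the forecaster's epistemic standpoint) relative Brier score they would incur by answering one additional question, i.e. $X = (\text{forecaster's Brier score on that question}) - (\text{aggregate's Brier score on that question})$. Since the overall score is the arithmetic mean of per-question relative Brier scores, answering the new question replaces $r$ by a convex combination of $r$ and the realized value of $X$; this lowers the mean exactly when that realized value is below $r$.

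Next I would pass to expectations. A forecaster who "wishes to obtain a low average relative Brier score" and who knows the true probability is, by the same reasoning as in Theorems~1 and~2, comparing $r$ against $E[X]$, the expected relative Brier score of the candidate question. By linearity of expectation, $E[X] = E[\text{forecaster's Brier score}] - E[\text{Brier score of the aggregate}]$, so the forecaster should answer precisely when
\begin{equation}
E[\text{forecaster's Brier score}] - E[\text{Brier score of the aggregate}] < r,
\end{equation}
which is the stated criterion, and should abstain (or is indifferent) otherwise.

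The one genuine subtlety — and the step I would be most careful about — is whether the forecaster's own prediction is itself a component of the aggregate. If the platform's aggregate includes the new submission, then $E[\text{Brier score of the aggregate}]$ depends on what the forecaster reports, so the comparison is not against a fixed quantity and there is a mild fixed-point flavour to the decision. The clean way to handle this is to state the result under the assumption that the forecaster's marginal effect on the aggregate is negligible (equivalently, that the aggregate is the leave-one-out aggregate, or that there are many forecasters), which is the regime the platforms in question operate in; the argument above then goes through verbatim. A secondary point worth a sentence is that, exactly as in Theorem~1, whether this is an alignment \emph{problem} depends on whether abstention reflects a desirable reallocation of attention or a distortion, so the theorem is a statement about incentives rather than a normative verdict.
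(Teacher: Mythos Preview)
Your proposal is correct and follows essentially the same approach as the paper's proof: use linearity of expectation to identify $E[\textnormal{forecaster's Brier score}] - E[\textnormal{Brier score of the aggregate}]$ with $E[\textnormal{relative Brier score on the new question}]$, and then invoke the running-average fact that appending a new term improves the mean exactly when that term lies below the current average $r$. Your write-up is actually more explicit than the paper's two-line argument, and your remark about the leave-one-out aggregate is a subtlety the paper does not discuss.
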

\begin{proof}
\begin{equation}
\begin{split}
&E[\textnormal{the forecaster's Brier score}] - E[\textnormal{Brier score of the aggregate}] \\
&= E[\textnormal{the forecaster's Brier score} - \textnormal{Brier score of the aggregate}] \\
&= E[\textnormal{relative Brier score}]
\end{split}
\end{equation}
and the forecaster should only predict if $E[\textnormal{relative Brier score}]<r$. Otherwise, predicting degrades their relative Brier score, in expectation. 
\end{proof}

This can be particularly problematic for questions where the community is confident, e.g., a question for which the aggregate forecast is $5\%$. Suppose that a forecaster knew with certainty that the event would not happen, and forecasted $0\%$. Then the resulting relative Brier Score would be $0 - 0.05^2/2 = - 0.00125$, which by inspection is poor; the authors observe that good forecasters tend to have relative Brier scores downwards of $-0.05$.

Note that this is only a problem if one is trying to maximize the mean relative Brier, that is, the average difference between her forecasts and the aggregate's. The problem disappears if one is trying to maximise the sum of relative Brier scores (as in the case of CSET-foretell), although the discussion below should be noted.

\subsection{The scoring rule incentivizes forecasters to just copy the community on every question.}
\label{Copy the opinion}

In scoring systems which more directly reward making many predictions, such as the Metaculus scoring system where in general one has to be both wrong and confident to lose points, predictors are heavily incentivised to make predictions on as many questions as possible in order to move up the leaderboard. In particular, a strategy of ``predict the community median with no thought'' could---in the expert opinion of one of the authors, a top 50 Metaculus forecaster himself---see someone rise to the top 100 within a few months of signing up. Metaculus allows advanced users to see other forecaster's track records by spending ``tachyons'', from which the authors found that at least one of the top 10 predictors on the leaderboard has performed worse than the community prediction\footnote{This is different from the ``Metaculus prediction'', which uses a proprietary algorithm to aggregate predictions. The community prediction merely averages them (weighted by recency).} on a per-question basis for both discrete and continuous questions.

However, if the main value of winning ``Metaculus points'' is personal satisfaction, then predicting exactly the community median is unlikely to keep participants entertained for long. New users predicting something fairly close to the community median on lots of questions, but updating a little bit based on their own thinking, is arguably not a problem at all, as the small adjustments may be enough to improve the crowd forecast, and the high volume of practice that users with this strategy experience might lead to rapid improvement.

This incentive to predict on all questions can also arise when scoring is based on the total (cumulative) relative Brier score. 

\begin{theorem}
    A forecaster seeking to optimize her relative Brier score should predict the average community forecast instead of not predicting on a question, even if she doesn't know anything about the question.
\end{theorem}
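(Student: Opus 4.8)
The plan is to directly compare the two strategies' contributions to the cumulative relative Brier score of the question in hand. Write $c$ for the aggregate (average community) forecast and let $o\in\{0,1\}$ be the eventual resolution. First I would note that if the forecaster abstains, the question contributes exactly $0$ to her cumulative relative Brier score---it simply does not enter her record. Second, I would compute her contribution if instead she submits the community forecast $c$: her Brier score on the question is $(c-o)^2$, and since a single submission moves the average forecast only negligibly---or, on platforms where the benchmark aggregate excludes her own prediction, exactly not at all---the aggregate's Brier score is also $(c-o)^2$; hence her relative Brier score on the question is $(c-o)^2-(c-o)^2=0$, deterministically, and irrespective of whatever (absent) information she has about the true probability $q$.

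So the two strategies tie at $0$, and the conclusion that she \emph{should} predict then follows from a tie-breaking observation: copying the community is never worse than abstaining, and it is strictly better as soon as she has any genuine information, since a proper local adjustment of $c$ has strictly negative expected marginal relative Brier score---this is essentially the propriety of the Brier score applied to the increment $c\mapsto c+\varepsilon$. I would also record the sharper statement available under the alternative convention in which the quantity subtracted is the \emph{average of the individual forecasters' Brier scores} rather than the Brier score of the average forecast: there, strict convexity of $x\mapsto(x-o)^2$ together with Jensen's inequality gives $(\bar c-o)^2<\tfrac1n\sum_i(c_i-o)^2$ whenever the crowd is not unanimous, so submitting $\bar c$ yields a strictly negative contribution and beats abstention with no information whatsoever.

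The main obstacle here is not mathematical but definitional: pinning down exactly what ``the aggregate's Brier score'' denotes on each platform, and whether the forecaster's own submission is counted in it, since the strength of the conclusion (weak versus strict dominance over abstaining) depends entirely on that choice. The computations themselves are each a single line or one appeal to Jensen's inequality. I would therefore organise the proof around the weak-dominance version---valid under the literal reading ``forecaster's Brier score minus the aggregate forecast's Brier score''---and attach the Jensen strengthening as a remark covering the mean-of-Briers convention.
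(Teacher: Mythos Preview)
Your proposal contains the paper's own argument, but you have inverted the emphasis. In the paper the relative Brier score is \emph{defined} as the forecaster's Brier score minus the \emph{average of the other forecasters' Brier scores},
\[
RBS(p_0) = Brier(p_0) - \frac{1}{n}\sum_{i=1}^{n} Brier(p_i),
\]
so what you call the ``alternative convention'' is in fact the operative one. Under that definition the proof is exactly the Jensen step you sketch: convexity of $x\mapsto (x-o)^2$ gives $Brier\!\left(\tfrac{1}{n}\sum p_i\right)\le \tfrac{1}{n}\sum Brier(p_i)$, hence submitting the community mean yields a non-positive (and, unless the crowd is unanimous, strictly negative) contribution, which beats the zero from abstaining. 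That is the whole of the paper's proof.

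By contrast, the argument you put forward as primary---taking the benchmark to be the Brier score of the aggregate forecast---only gets you a tie at $0$, and your proposed tie-breaker does not rescue the theorem as stated. The hypothesis is precisely that the forecaster ``doesn't know anything about the question,'' so the observation that a small informed deviation from $c$ would have negative expected relative Brier is inapplicable: with no information there is no informed deviation, and under your primary reading the two strategies remain exactly indifferent. The strict conclusion the paper wants comes only from the mean-of-Briers definition plus Jensen, which you should promote from remark to main argument.
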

\begin{proof}
The relative Brier score is calculated as an average relative score over all other predictors, i.e.

\begin{equation}
RBS(p_0) =  Brier(p_0) - \frac{1}{n}  \sum_{i=1}^{n} Brier(p_i)
\end{equation}

The community prediction is given by $\displaystyle \frac{1}{n}  \sum_{i=1}^{n} p_i$, so the forecaster's Brier score if she predicts the community prediction is

\begin{equation}
RBS(p_0) =  \displaystyle Brier\left(\frac{1}{n}  \sum_{i=1}^{n} p_i\right) - Brier\left( \frac{1}{n}  \sum_{i=1}^{n} p_i\right)
\end{equation}

However, the Brier score is strictly increasing and convex, so 
\begin{equation}
    Brier\left( \frac{1}{n}  \sum_{i=1}^{n} p_i\right) \le  \frac{1}{n} \sum_{i=1}^{n} Brier(p_i)
\end{equation}

Thus the forecaster's relative Brier score  will always be negative\footnote{i.e., better} (or zero if all predictions are the same.) And thus predicting the community average is more advantageous than leaving the question blank.
\end{proof}

In practice, this makes community predictions more ``sticky'', that is, slower to adapt to new information. This is because new forecasts with new information are averaged with the old forecasts, of which there is an incentive to be many. This would not be a problem if forecasters updated often, but in practice forecasters who are lazy enough to forecast the community average on many questions in order to gain some marginal relative Brier or Metaculus points also don't update their forecasts that often.



\subsection{The scoring rule incentivizes forecasters not to share information and to produce corrupt information}
\label{Corrupt info}

In a forecasting tournament which rewards comparative accuracy, there is a disincentive to sharing information, because other forecasters can use it to improve relative standing. This disincentive includes both not sharing information and providing misleading information. Perhaps for this reason, most forecasts on Good Judgement Open are accompanied by blank comments. 

As a counterpoint, other forecasters can and will often point out flaws in one's reasoning if one gives an unconvincing rationale. Further, regardless of the rationale one writes, in Good Judgment Open and CSET, other forecasters can see one's probabilities. Thus, seeing a blank forecast rationale accompanied by a forecast of 100\% by a forecaster with a good track record can be enough to infer that this forecaster has private information. 

Publishing misleading information does not seem to be an urgent problem today. However, this might only be the case because forecasting competitions are currently relatively niche and small. If they grew further, they might encounter similar problems to PredictIt, where market participants have on occasion created fake polls which confused election bettors.

Two notable cases from the PredictIt community are the cases of Delphi Analytica, and  CSP polling. With regards to Delphi Analytica, \cite{Fake polls real problem} explains in a FiveThirtyEight article:

\begin{quote}
    ``Delphi Analytica released a poll fielded from July 14 to July 18. Republican Kid Rock earned 30 percent to Sen. Debbie Stabenow’s 26 percent. A sitting U.S. senator was losing to a man who sang the lyric, “If I was president of the good ol’ USA, you know I’d turn our churches into strip clubs and watch the whole world pray.”
    
    McDonald believes that `Jones' and whoever may have helped him or her did so for two reasons. The first: to gain notoriety and troll the press and political observers. (The message above seems to support that theory.) The second: to move the betting markets. That is, a person can put out a poll and get people to place bets in response to it — in this case, some people may have bet on a Kid Rock win — and the poll’s creators can short that position (bet that the value of the position will go down). In a statement, Lee said Delphi Analytica was not created to move the markets. Still, shares of the stock for Michigan’s 2018 Senate race saw their biggest action of the year by far the day after Delphi Analytica published its survey.

    The price for one share — which is equivalent to a bet that Stabenow will be re-elected — fell from 78 cents to as low as 63 cents before finishing the day at 70 cents. (The value of a share on PredictIt is capped at \$1.) McDonald argued that the market motivations were likely secondary to the trolling factor, but the mere fact that the markets can be so easily manipulated is worrisome.''
\end{quote}

With regards to CSP polling, \cite{Yeargain} explains in the Michigan Law Review:
\begin{quote}
    ``a PredictIt user seeking to purchase a futures contract on the outcome of the Republican primary in Alabama’s 2017 special U.S. Senate election who comes across a poll predicting a result of that exact election, allegedly conducted by CSP Polling, might reasonably consider that poll in their purchasing decision – even if they do not know that CSP lacks a track record or any indicia of reliability.  And given the speed with which PredictIt users buy and sell contracts, a user seeing this information might reasonably conclude that if she is to use this information to her benefit, she needs to act quickly.

    CSP Polling – which, according to University of Florida political science professor Michael McDonald and Jeff Blehar of the National Review, stands for `Cuck Shed Polling' – alleged that it conducted polls in the 2017 special congressional election in Montana, the special congressional election in Georgia, and the Virginia Democratic primary for Governor. Even after being identified in FiveThirtyEight as a fake pollster, CSP Polling continued to release polls, though the seriousness of the poll “releases” noticeably deteriorated in the year that followed.''
\end{quote}

For an example within the forecasting space, the Forecasting AI Progress Tournament (\cite{aiprogress}) saw very few forecasts in its first round, so organizers decided to require the publication of at least three comments per participant in subsequent rounds. However, participants are still cognizant that sharing their insights reduces the advantage they have over competitors, and thus their expected share of the \$50,000 prize. A top 10 forecaster shares ``I personally have made less comments on Metaculus over time (especially on tournaments with monetary rewards) due to realizing that it's better for me not to share.''

A different way to look at the perverse incentives of non-cooperative scoring rules is to imagine a population which is differential altruistic, as is the case in Metaculus or in CSET-Foretell. More altruistic participants who share information benefit the less altruistic participants who do not. With the passage of time, altruistic forecasters would be systematically disadvantaged, and less altruistic forecasters would rise to the top. Of course, altruistic forecasters are not necessarily naïve, and foreseeing this dynamic, may choose to not share information either, or to only share information among themselves.

If we look at Metaculus' leaderboard, top 10 forecasters leave comments in respectively 5.5\%, 16\%, 21\%, 6\%, 12\%, 60\%\footnote{Not a typo}, 1.5\%, 0.7\% 6\% and 15\% of the questions they forecast on. This percentage seems low, and it seems natural to hypothesize that it would be higher if forecasters were not incentivized to make it lower.

If forecasting competitions were to be expanded to take a broader role in society, if monetary prizes were to significantly increase, or if the forecasting community were to grow in size and become less tightly knit, the above problems might be exacerbated.1

\section{Principal-agent problems $\approx$ Inner alignment incentive problems} 
\label{Inner alignment incentive Problems}

Forecasters sometimes deeply care about their score on forecasting platforms, spending long and mostly unpaid hours forecasting on them. Sometimes, they care about reward maximization explicitly; for example, forecaster @yagudin\footnote{One of the authors is in fact in a forecasting team with @yagudin, but doesn't share that amoral  maximizing perspective to such an extent.} (who is in the top 1\% of CSET-foretell forecasters) writes \textit{``If this question resolves positively I will get a lot of points. This question could only resolve negatively in 5 years, likely, I wouldn't care about my Foretell predictions then"} as a comment under ``Will the U.S. government file an antitrust lawsuit against Apple, Amazon, or Facebook between January 20, 2021 and January 19, 2025?". In this case, the forecaster has ceased to maximize expected comparative predictive accuracy and ceased to report his true probabilities, and has started maximizing a different reward. 

In the previous section, we considered forecasters who were dutifully optimizing their expected accuracy (e.g., their Brier score, or the difference between their Brier score and that of their competitors). In particular, forecasters were still reporting their true probabilities throughout. This still led to problems because there is some difference between ``maximize your comparative accuracy'' and ``produce useful information''. In this section, we consider problems where forecasters cease to report their true probabilities, even when forecasting competitions organizers wish they hadn't, or think they haven't.


\subsection{If scoring is translated to discrete prizes, this creates an incentive to distort forecasts} 
\label{Discrete prizes}

If a forecasting tournament offers a prize to the top X forecasters, the objective ``be in the top X forecasters" differs from ``maximize predictive accuracy". The effects of this are greater the smaller the number of questions.

For example if only the top forecaster wins a prize, forecasters might want to predict a surprising scenario, because if it happens they will reap the reward, whereas if the most likely scenario happens, everyone else will have predicted it too.



This effect is also present if one considers acquiring the title of "Superforecaster" as a reward or as an objective in itself. One might consider it so because it comes with a degree of prestige and career capital. This objective can be achieved by by reaching the top 2\% of forecasters and having over 100 predictions in Good  Judgement Open. But note that the objective of ``place in the top 2\%" is different from ``be a s accurate as possible."


\subsubsection{Optimal distortion in a tournament with continuous questions}
To attempt to quantify these effects, we ran some numerical simulations for continuous questions. In this section, we consider a perfect predictor in a tournament with a varied pool of 30 players. This perfect predictor can either predict the true continuous distribution from which the question resolution will be drawn, or manipulate its mean and standard deviation. 

More detailed specification details, as well as more complex simulations and simulations for binary rather than continuous questions can be found in Appendix \S\ref{Simulations}.  

In Figures \ref{MeanDistortion} and \ref{SdDistortion}, we find that, given Metaculus's scoring rule,  after roughly 10 questions, the incentive to distort either the mean or  the standard deviation for one question is much reduced. In Figure \ref{MeanAndSdDistortion} we then consider a small tournament of 5 questions, and observe that the optimal distortion is attained when distorting the mean by $50$  units and the standard deviation by $12$ units. This results in a probability of placing in the top 3 of $16.2\%$, as opposed to $13.46\%$ in the case where there is no distortion. 

\begin{figure}[H] 
    \includegraphics[width=12cm]{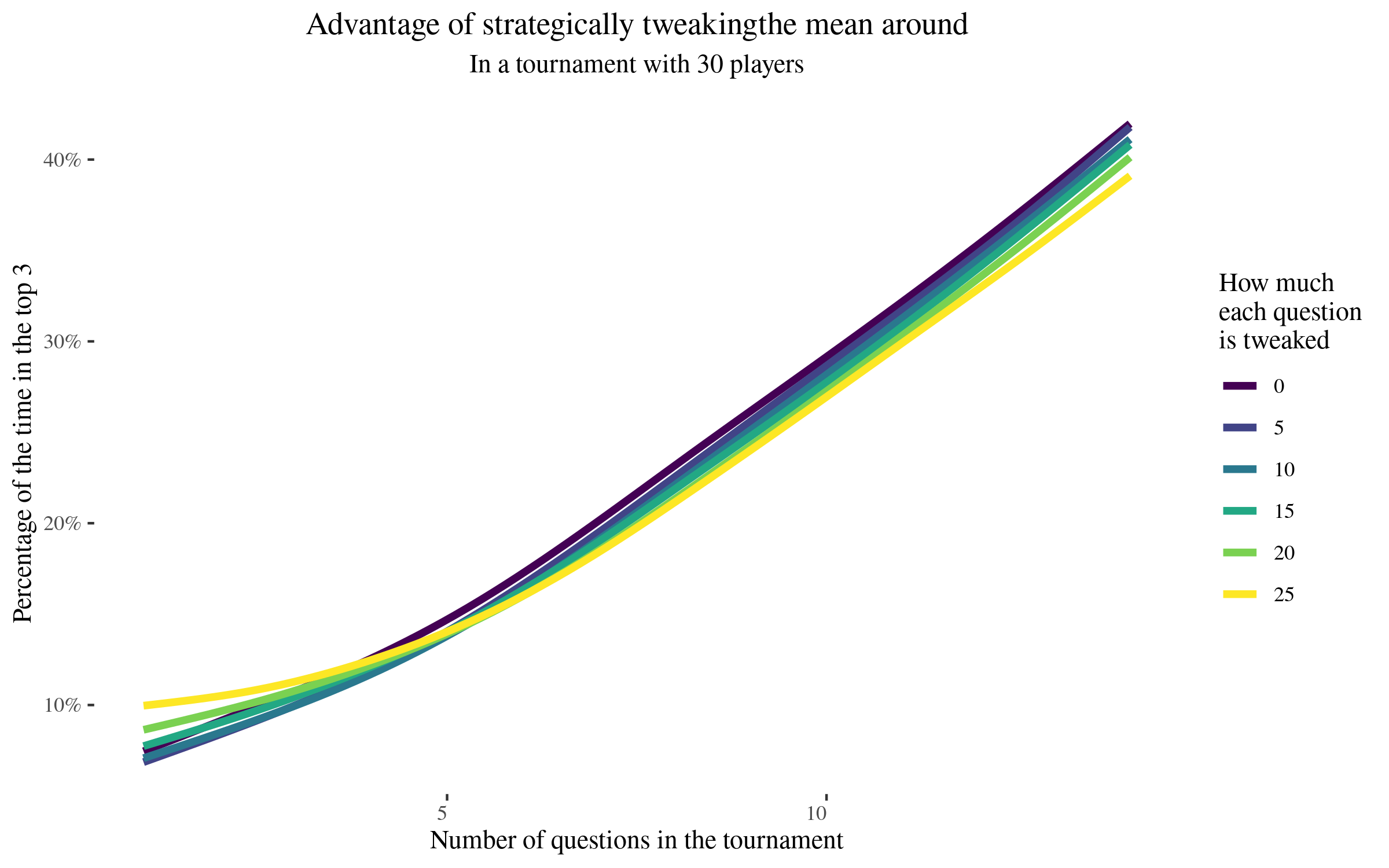}
    \centering
    \caption{Distorting the mean for one question in a five question tournament with 30 other participants. 2000 simulations for each point.} \label{MeanDistortion}
\end{figure}

\begin{figure}[H] 
    \includegraphics[width=12cm]{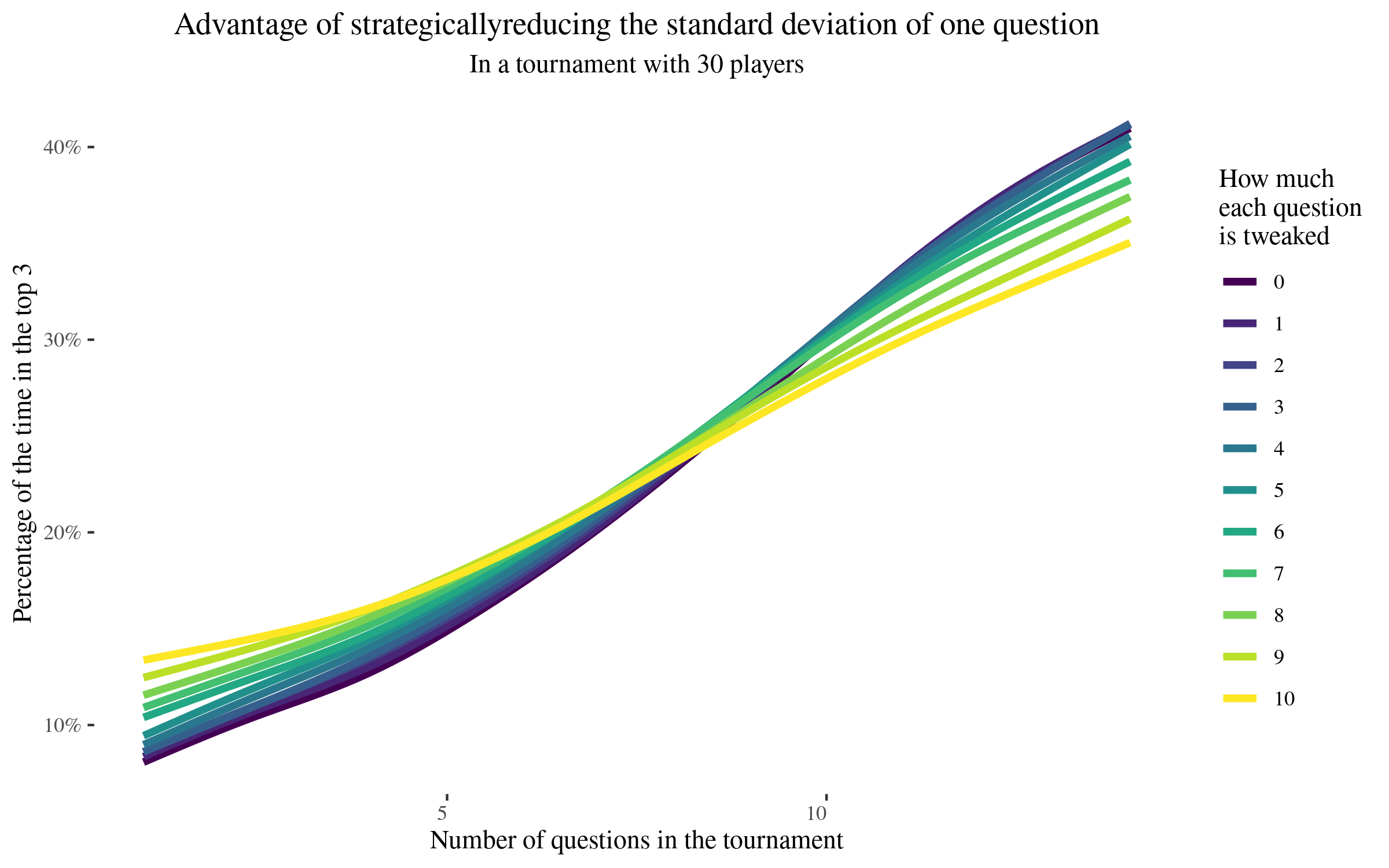}
    \centering
    \caption{Distorting the standard deviation for one question in a five question tournament with 30 other participants. 2000 simulations for each point.}\label{SdDistortion}
\end{figure}

\begin{figure}[H] 
    \includegraphics[width=12cm]{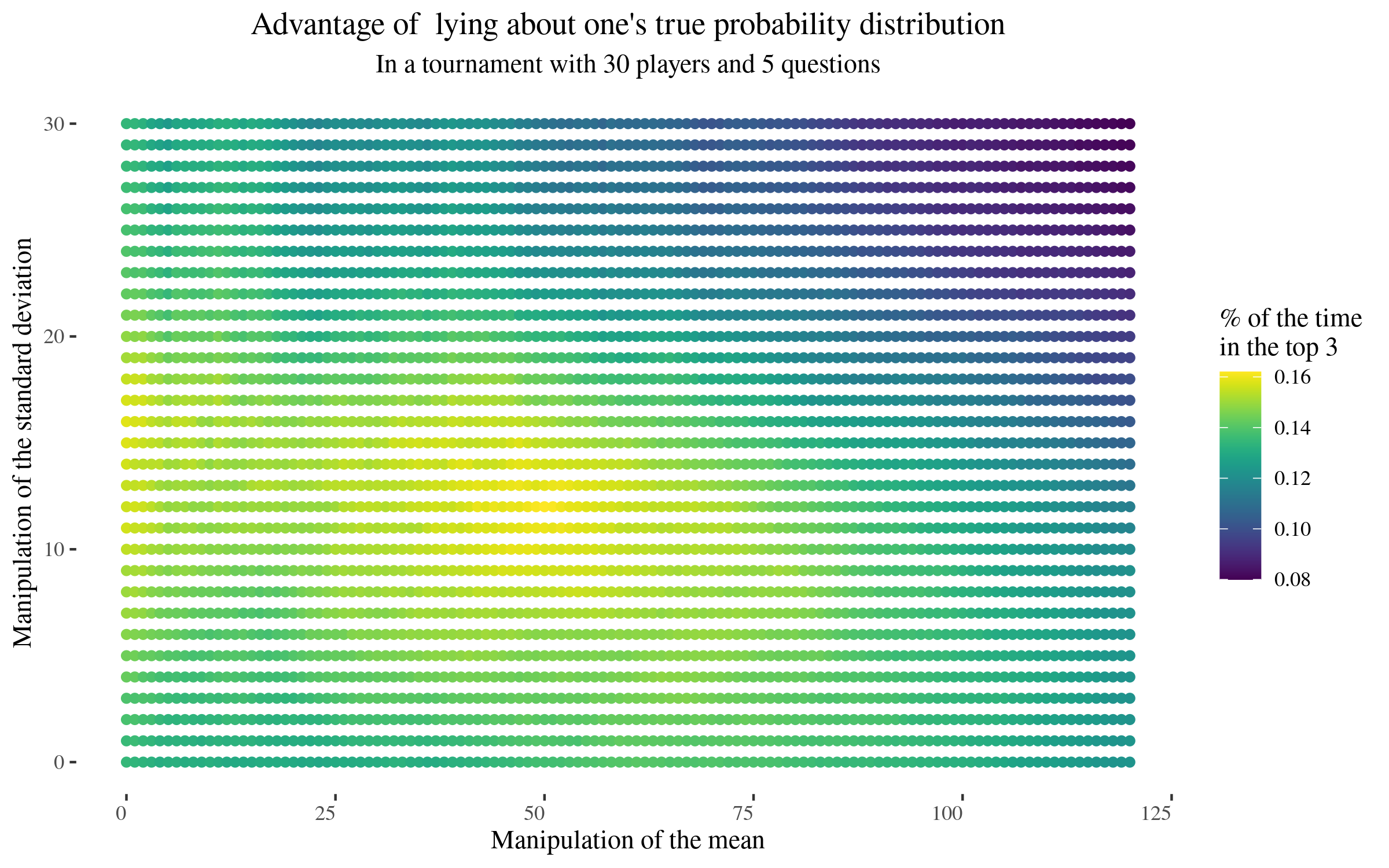}
    \centering
    \caption{Distorting both the mean and the standard deviation for one question in a five question tournament with 30 other participants. 5000 simulations for each point.} \label{MeanAndSdDistortion}
\end{figure}

Note that there are decreasing marginal returns to distorting any one question, so in practice, better results might be achieved by distorting more than one question by different amounts. Thus, the above can be thought of as a lower bound for the amount of distortion, not an upper bound.


For a result closer to an upper bound, we run the same simulations as above, but we also allow for variation in the number of questions the predictor is allowed to distort beyond just one. We find that that the maximum is attained when two questions are distorted, with a distortion of the mean of 0 units and a distortion of the standard deviation of 10 units. In that case, the predictor places in the top 3 $17.3\%$ of the time, in contrast with the previous $13.46\%$ in the case where there is no distortion and $16.2\%$ when only distorting one question.

\begin{figure}[H] 
    \includegraphics[width=12cm]{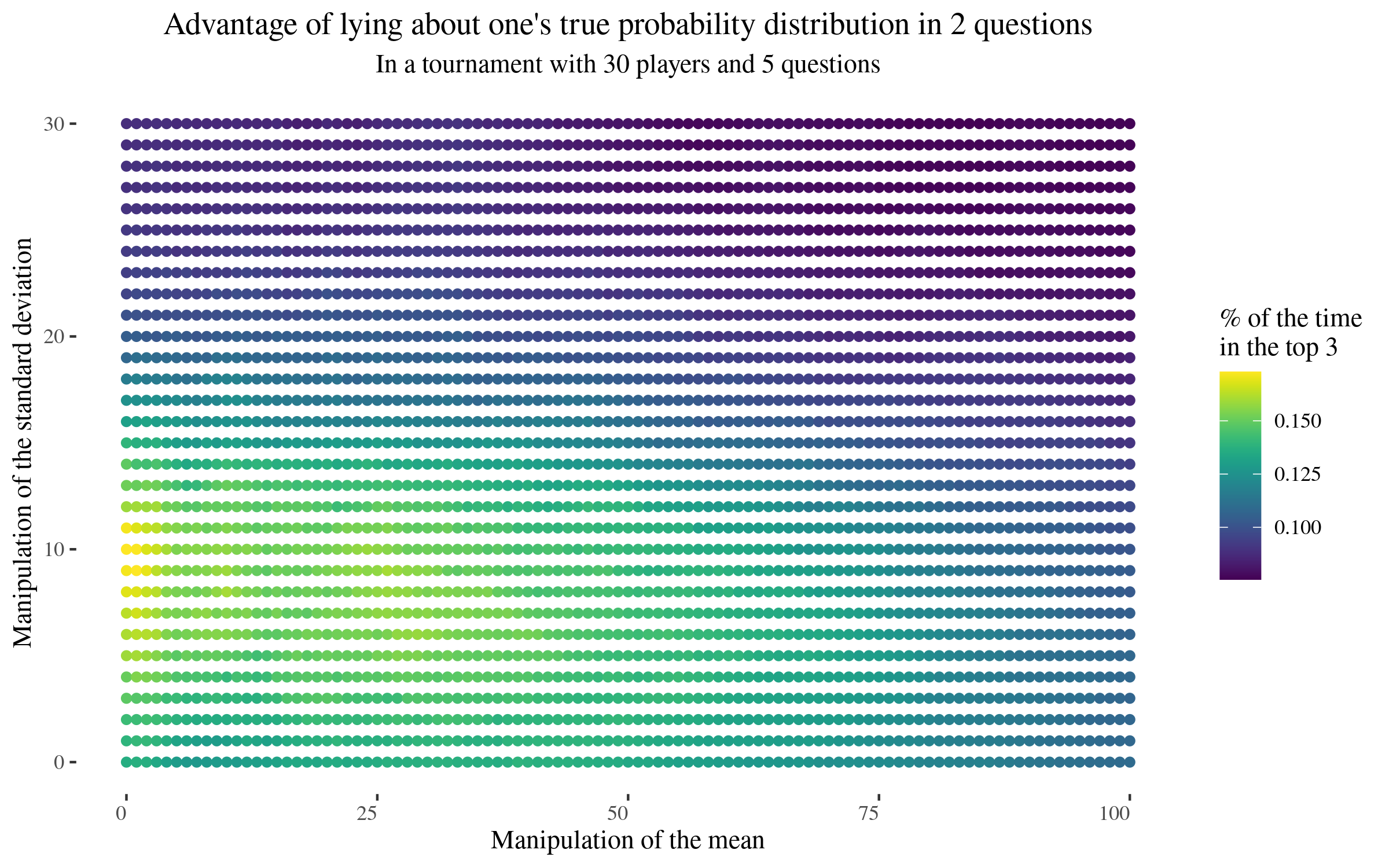}
    \centering
    \caption{Distorting both the mean and the standard deviation for two questions in a five question tournament with 30 other participants. 5000 simulations for each point.} \label{MeanAndSdDistortion2questions}
\end{figure}

The point that forecasters seeking to place at the top of a leaderboard will have different incentives to purely maximising their score is known among forecasting researchers and practitioners; see e.g., (\cite{Lichtendahl}) for a proof in the two forecaster case. However, we are not aware of previous literature which considers numeric simulations of continuous questions to quantify the incentive to distort one's own forecasts.

\subsection{The participation-rate weighted Brier score is not a proper scoring rule}
\label{GJ not proper}

We define ``participation-rate weighted Brier score'' as follows: the Brier score integrated in time multiplied by the participation rate. The participation rate is the percentage of time that the forecaster had an active prediction, from the beginning of the question until question resolution. 

More formally, if a question opens at time $t_0$, a forecaster starts forecasting at time $t_1$, the question closes\footnote{In this setup, forecasters can't withdraw from a question, which means that if they stop updating their forecasts, their relative score suffers as a result} at time $t_2$, and the forecaster's prediction at time $t$ is $P(t)$, then ``participation-rate weighted Brier score'' is:
\begin{equation}\label{PWBS definition}
    PWBS(p) = \frac{1}{t_2-t_0} \cdot \int_{t_1}^{t_2} Brier(p(t)) dt
\end{equation}

In this section, we prove that the ``participation-rate weighted Brier score'' is not a proper scoring rule.

The participation-rate weighted Brier score or variants thereof is used throughout various Cultivate Labs platforms, such as Good Judgment Open, CSET-foretell, or Covid Impacts. For more information, see \cite{GJSR}. Note that Good Judgment may reward the relative ``participation-rate weighted Brier score'', namely: 
\begin{equation}
    RPWBS(p) = \frac{1}{t_2-t_0} \cdot \int_{t_1}^{t_2} Brier(p(t)) - Brier(p_{aggregate}(t))dt
\end{equation}

However, whether the ``participation-rate weighted Brier score'' is relative or not doesn't make a difference to this section and its proof, so we will ignore it\footnote{Note that we can separate the $Brier(p_{aggregate(t)})$ term in its own integral
\begin{equation}
    \begin{split}
        RPWBS(p) &= \frac{1}{t_2-t_0} \cdot \int_{t_1}^{t_2} Brier(p(t)) - Brier(p_{aggregate}(t))dt \\
        &= \frac{1}{t_2-t_0} \cdot \int_{t_1}^{t_2} Brier(p(t)) - \frac{1}{t_2-t_0} \cdot \int_{t_1}^{t_2} Brier(p_{aggregate}(t))dt
    \end{split}
\end{equation}
The $\frac{1}{t_2-t_0} \cdot \int_{t_1}^{t_2} Brier(p_{aggregate}(t))dt$ term is outside the forecaster's control, and thus 
\begin{equation}
    \begin{split}
        RPWBS(p) = \frac{1}{t_2-t_0} \cdot \int_{t_1}^{t_2} Brier(p(t)) - C
    \end{split}
\end{equation}

But minimizing this expression is equivalent to minimizing the expression without the constant, i.e., to maximizing the PWBS. 
} . Additionally, the Cultivate Labs platform, on which Good Judgment Open and other tournaments run, only permits one prediction per day, so the integral would instead be a summation ($\sum$). Again, this doesn't change the conclusion. 

We will provide an overview of our argument, and then a formal proof. As for the overview, the ``participation-rate weighted Brier score'' ceases to be proper when it gets multiplied by the participation rate. The crux of the issue is that if a question set to resolve in 100 days instead resolves earlier, say, after the first 10 days, those who have forecasted during the first 10 days are counted as having 100\% participation, instead of 10\% participation. This then incentivizes higher probabilities at the beginning for questions which are unlikely to be resolved soon, but such that if they resolved soon, it is knowable how they resolve. Many questions have this property, such as whether a prominent figure will leave office, whether a deal or announcement will be made, or whether an event will in general happen. 

\begin{theorem}
    The ``participation-rate weighted Brier score'' is not proper. Further, the misincentive to report a dishonest probability is practically unbounded. 
\end{theorem}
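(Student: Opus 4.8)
The plan is to exhibit a concrete scenario in which a forecaster with true beliefs strictly benefits, in expectation, from reporting a probability other than their true one, and then to show that by tuning the parameters of the scenario the size of this benefit relative to an honest report can be made arbitrarily large. First I would set up the simplest possible instance: a binary question that opens at $t_0 = 0$ and has a "deadline" $t_2 = T$ for large $T$, but which, conditional on resolving positively, resolves at an early time $t^* \ll T$ (e.g.\ "will figure $X$ leave office", which if it happens, happens soon and visibly), and conditional on not resolving early, is known to resolve negatively at $T$. Let the true probability of the early-positive branch be $q$, with $q$ small. A forecaster who enters at $t_1 = 0^+$ and holds a constant prediction $p$ until resolution then receives, by Equation~\eqref{PWBS definition}, a score of the form $\frac{t^*}{T} q\,(1-p)^2 + \frac{T}{T}(1-q)\, p^2$ in the positive branch versus the $\frac{T}{T}$ weighting in the negative branch --- the key asymmetry being that the positive branch, because it terminates at $t^*$, is down-weighted by $t^*/T$ while the negative branch carries full weight.

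The second step is the optimisation: minimising the resulting expected score $S(p) = \frac{t^*}{T}\, q\,(1-p)^2 + (1-q)\, p^2$ over $p \in [0,1]$ gives an interior optimum $p^\star$ satisfying $-\frac{t^*}{T} q (1-p^\star) + (1-q) p^\star = 0$, i.e.\ $p^\star = \frac{(t^*/T)\,q}{(t^*/T)\,q + (1-q)}$, which is strictly smaller than the Bayes-honest answer one would give if the weighting were uniform --- so the rule is already not proper in this direction. To get the claimed \emph{unbounded} distortion I would instead flip the construction so that the short branch is the one a naive scoring rule would have the forecaster down-weight away from, or more simply keep this branch structure but compare ratios: as $t^*/T \to 0$ with $q$ fixed, $p^\star \to 0$, whereas a forecaster on a question with the same conditional probabilities but honest (uniform) weighting would report the fixed value $q/(q + (1-q)) = q$; the ratio of the honest report to the incentivised report, or equivalently the multiplicative gap in reported odds, diverges. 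Dually, by placing the \emph{positive}-resolution branch at full weight and the negative branch at the early time, the same algebra forces $p^\star \to 1$, so a forecaster who truly believes the probability is some small $q$ is incentivised to report a probability arbitrarily close to $1$ --- this is the "practically unbounded" claim in its starkest form, and it is the version I would feature.

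Concretely, then, the steps in order are: (i) define the two-branch question and the constant-prediction strategy, writing down $PWBS$ via Equation~\eqref{PWBS definition}; (ii) compute the expected score $S(p)$ as a quadratic in $p$, using that the early branch gets time-weight $t^*/T$ and the late branch weight $1$; (iii) minimise to get $p^\star$ and observe $p^\star \neq q$, establishing non-properness; (iv) take the limit $t^*/T \to 0$ (equivalently $T \to \infty$ with $t^*, q$ fixed) in the variant where the full-weight branch is the positive one, showing $p^\star \to 1$ while the honest probability stays at $q$, so the gap between the honest and the score-maximising report approaches the full width of the probability interval and the distortion in odds-space is literally unbounded; (v) remark that replacing the integral by a daily sum (as on Cultivate Labs) and/or subtracting the aggregate term $C$ changes nothing, citing the footnote reduction already in the text. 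The main obstacle I anticipate is not the algebra --- which is a one-line quadratic minimisation --- but the modelling care in step (i): I need the scenario to be one that genuinely satisfies the informal description (a real-world-plausible question that "if it resolved soon, it is knowable how it resolves"), and I need to be explicit that the forecaster is constrained to have entered near $t_0$ and cannot withdraw, since otherwise the participation-rate multiplier could be gamed in other ways and the clean quadratic would not capture the operative incentive. Getting that setup airtight, and stating precisely what "unbounded" quantifies over (reported odds, or the ratio of implied to true probability), is where I would spend the most effort.
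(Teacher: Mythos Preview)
Your step (ii) miscomputes the PWBS. In Equation~(\ref{PWBS definition}) the normalising denominator is $t_2-t_0$, where $t_2$ is the time the question \emph{actually} closes, not the nominal deadline $T$. So in your early-positive branch the denominator is $t^*$, not $T$, and a forecaster present from $t_0$ is counted as having $100\%$ participation: the branch score is $(1-p)^2$, not $\tfrac{t^*}{T}(1-p)^2$. (This is exactly the ``crux'' flagged in the paragraph preceding the theorem.) With that correction, your constant-$p$ forecaster faces expected score $q(1-p)^2+(1-q)p^2$, which is minimised at $p=q$: the scenario, as you have written it, exhibits \emph{no} improperness at all.

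The missing ingredient is that the forecaster must \emph{update} after $t^*$. The paper's forecaster predicts $p_1$ on $[0,t^*]$ and then, having observed that the event did not occur, switches to $p_2=0$ on $[t^*,T]$. Now the positive branch has denominator $t^*$ and score $(1-p_1)^2$, while the negative branch has denominator $T$ and score $\tfrac{t^*}{T}\,p_1^2 + \tfrac{T-t^*}{T}\cdot 0 = \tfrac{t^*}{T}\,p_1^2$. The expected PWBS becomes $q(1-p_1)^2+(1-q)\tfrac{t^*}{T}\,p_1^2$, minimised at $p_1^\star=q\big/\bigl(q+(1-q)\,t^*/T\bigr)$, which exceeds $q$ and tends to $1$ as $t^*/T\to 0$. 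The asymmetry is therefore the \emph{opposite} of what you state: the early-resolving branch carries full weight, and it is the early-period prediction \emph{within the late branch} that gets diluted by $t^*/T$. Your ``dual flip'' in step (iv) lands on the right qualitative conclusion ($p^\star\to 1$), but only by swapping coefficients, not via a correct model of the rule; the argument as outlined does not prove the theorem.
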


\begin{proof}
    Let ``$PWBS$'' stand for ``participation-rate weighted Brier score''
    Suppose a question has a 10\% chance of resolving positively in two weeks (Scenario 1), and otherwise a 90\% probability of resolving negatively in one year (Scenario 2). Suppose that a forecaster predicts $p_1$ during the first week, and  $p_2=0$ afterwards. Then the expected value of the score is:
    \begin{equation}
        \begin{split}
            E[PWBS(p_1, p_2)] &= 0.1 \cdot PWBS(\textnormal{Scenario 1}) + 0.9 \cdot PWBS(\textnormal{Scenario 2}) \\ 
        \end{split}
    \end{equation}
    \begin{equation}
        PWBS(\textnormal{Scenario 1}) = \frac{1}{t_1-t_0} \cdot \int_{t_0=0}^{t_1=2} (1-p_1)^2 = (1-p_1)^2
    \end{equation}
    \begin{equation}
        \begin{split}
            PWBS(\textnormal{Scenario 2}) = &\frac{1}{t_2-t_0} \cdot \Big( \int_{t_0=0}^{t_1=2} (0-p_1)^2 dt + \int_{t_1 = 2}^{t_2=52} (0-p_2)^2 dt \Big) \\
            &= \frac{2}{52} \cdot (0-p_1)^2 + \frac{50}{52} \cdot (0-p_2)^2 
        \end{split}
    \end{equation}

    \begin{equation}
        \begin{split}
            E[PWBS(p_1, p_2)] &= 0.1 \cdot (1-p_1)^2\\
            &+ 0.9 \cdot \left(\frac{2}{52} \cdot (0-p_1)^2 + \frac{50}{52} \cdot (0-p_2)^2 \right)
        \end{split}
    \end{equation}

    Taking out the $(0-p_2)^2$ term, which is 0, then (with $\frac{2}{52} = 0.03846$):

    \begin{equation}
        E[PWBS(p_1)] = 0.1 \cdot (1-p_1)^2 + 0.9 \cdot 0.03846 \cdot (0-p_1)^2
    \end{equation}

    We can take the derivative of $f(x) = E[PWBS(x)]$ to find the minimum:

    $$f'(p_1) = E'[PWBS(p_1)] = 0.1 \cdot ((-1) \cdot 2\cdot(1-p_1)) + 0.9 \cdot (0.03846 \cdot 2 \cdot p_1)$$

    $$f'(p_1)/2 = -0.1 \cdot (1-p_1)+ 0.9 \cdot (0.03846 \cdot p_1)$$

    $$f'(p_1)/2 = -0.1  + 0.1 \cdot p_1+ 0.034614 \cdot p_1$$

    $$-f'(p_1)/2 = -0.1  + 0.134614 \cdot p_1$$

    Now the minimum is reached when $f'(p_1)=0$, and hence when:
    
    \begin{equation}
        p_1 = \frac{0.1}{0.134614}=0.74286... \approx 74\%   
    \end{equation}

    which is a significant misincentive (i.e., declaring a 74\% when one really believes 10\%). 
    
    In general, per reasoning analogous to the above, if the true probability is $t$ and the first interval (e.g. 0.034614 for 2 out of 52 weeks in a year) is $i$, then the optimal probability to declare is  $\frac{t}{t+i}$, \textit{making the error unbounded}, that is, if $i$ is small enough, then one might be incentivized to input $100\%$ when one's true probability is $1\%$.

    In practice, Good Judgement Open and other Cultivate Labs platforms rarely have questions longer than two years, and only the last update for each forecaster is counted for each day. So if $t=1\%$, and $i=$ one day in two years, then one would be incentivized to input: 
    
    $$ p_1 = \frac{1}{1+(1/(2 \cdot 365))} = 99.86\%$$

    Note that, because Cultivate Labs platforms only accept integers as probabilities, the above might be rounded up to $100\%$ by forecasters. This means that in the case above, forecasters are incentivized to enter a probability of 100\% even when they believe the true probability to be 1\%
\end{proof}

One might object that this type of misincentive does not occur in most questions, and that we are merely counting beans. However, if at the beginning of the Cuban missile crisis Good Judgment had asked about the probability of a nuclear exchange in the next two years, this category of error might have applied in full force, yet it might have been all the more valuable to have access to highly calibrated probabilities. 

Empirically, questions such as \href{https://www.gjopen.com/questions/1820-will-prayut-chan-o-cha-cease-to-be-the-prime-minister-of-thailand-before-23-april-2021}{Will Prayut Chan-o-cha cease to be the prime minister of Thailand before 23 April 2021?}, display behavior broadly consistent with what optimal (dishonest) behavior would recommend. That is, the probabilities for the question are excessively high at the beginning. 

\begin{figure}[h!]
    \includegraphics[width=12cm]{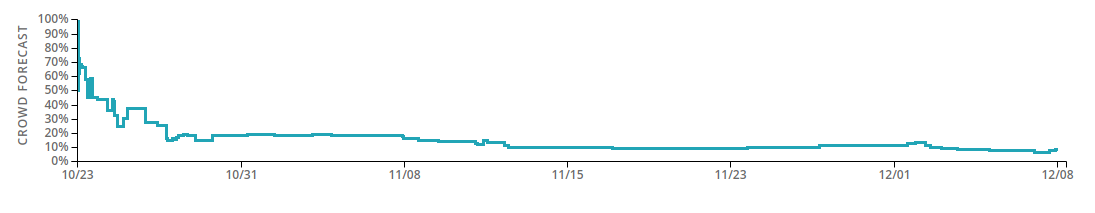}
    \centering
    \caption{Will Prayut Chan-o-cha cease to be the prime minister of Thailand before 23 April 2021?: Consensus probability over time on Good Judgement Open}
\end{figure}

In contrast, updating either according to Laplace's rule, or just naïvely updating on the passage of time\footnote{I.e., where the probability starts at $0.66=66\%$, and the remaining probability at timetime $t$ is given by $$p(t)=1-(1-0.66)^{\frac{182-t}{182}}$$} produces very different update shapes, per Figures 6 and 7.

\begin{figure}[h]
    \includegraphics[width=12cm]{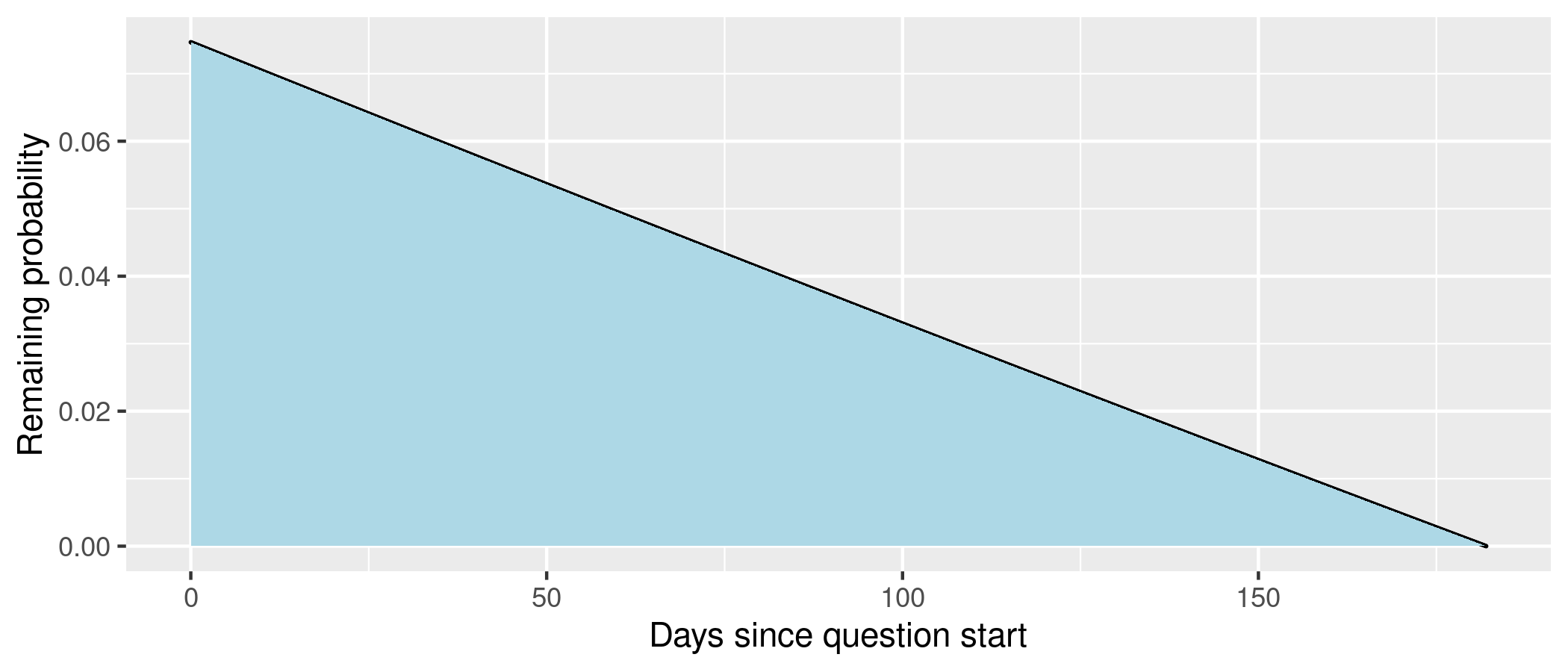}
    \centering
    \caption{Decreasing probability per Laplace's rule of Prayut Prayut Chan-o-cha ceasing to be the Prime Minister by April 23rd 2021, who at question start had been Prime Minister for 2346 days}
\end{figure}

\begin{figure}[h]
    \includegraphics[width=12cm]{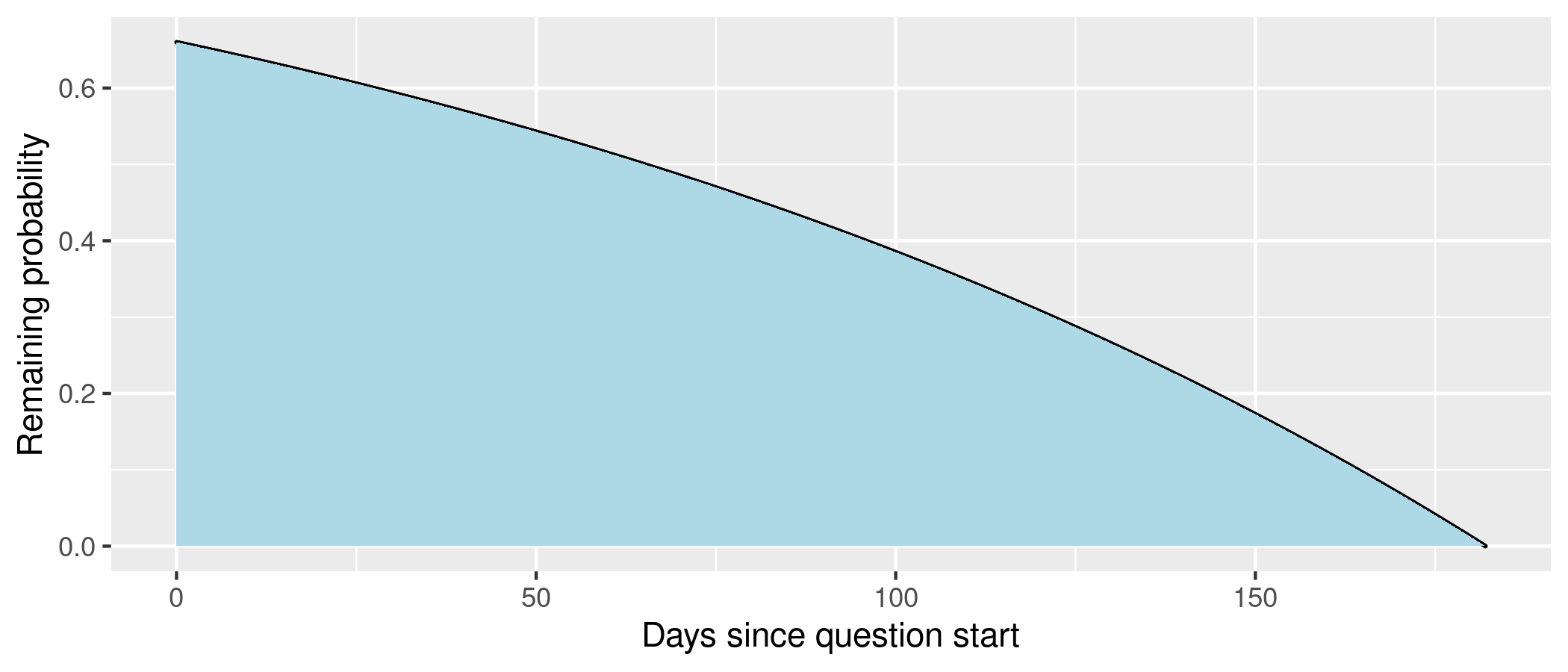}
    \centering
    \caption{Decreasing probability per ``naïve updating" of Prayut Prayut Chan-o-cha ceasing to be the Prime Minister by April 23rd 2021, who at question start had been Prime Minister for 2346 days}
\end{figure}

We do not believe that this is a result of conscious decisions on the part of forecasters. However, positing such conscious decisions isn't necessary; see \cite{lagerros} for an accessible discussion about ``Unconscious Economics'', as well as \cite{friedman2} and  \cite{friedman1} for a more in-depth treatment.

\subsubsection{Implications for "extremization"}

"Extremization" refers to taking aggregate probabilities and moving them towards 0\% or 100\%. Quoting \cite{Tetlock}, pp.90-91:
\begin{quote}
    Then you give the forecast a final tweak: You ``extremize" it, meaning you push it closer to 100\% or to zero. If the forecast is 70\%, you might bump it up to, say, 85\%. If it's 30\%, you might reduce it to 15\%. \\
    Now imagine that the forecasts you produce this way beat those of every other group and method available, often by large margins. Your forecasts even \textbf{beat} those of professional intelligence analysts inside the government who have access to classified information--by margins that remain classified. [...] It actually happened. What I've described is the method we used to win IARPA's tournament. [Emphasis mine]
\end{quote}

In this section, we prove that there are cases where extremization of otherwise accurate probabilities is incentivized by the ``participation-rate weighted Brier score", which is improper, and wouldn't be incentivized by a proper scoring rule. We will do this by constructing two scenario in which this is the case. In the first scenario, the effect to distort probabilities is small. In the second scenario, the effect is larger.

\begin{theorem}
    The ``participation-rate weighted Brier score'' rewards dishonest extremization. 
\end{theorem}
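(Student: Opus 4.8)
The plan is to re-run the mechanism of the preceding theorem twice, with two different parameter choices, and to read the resulting distortion as an \emph{extremization}. Recall the mechanism: when a long-horizon question happens to resolve early, the forecasters who were active during the short early interval are credited with $100\%$ participation rather than with the true small fraction $i/T$ of the window, so the $PWBS$-optimal early report is pulled toward whichever outcome an early resolution reveals; by the computation in that proof the optimal early report is $\frac{t}{\,t+(1-t)(i/T)\,}$ when $t$ is the true probability of the early-resolving outcome. To turn this into extremization I would fix a template in which the early-resolving outcome is the \emph{more likely} one and its probability exceeds $\tfrac12$: with probability $\delta>\tfrac12$ the question resolves YES within a short initial interval $[0,i]$, and otherwise (probability $1-\delta<\tfrac12$) it runs to the deadline $T$ and resolves NO there. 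Such questions are common --- ``will this nominee be confirmed / will this routine bill pass before the deadline,'' likely and soon, but with a small tail in which it stalls and eventually fails. Conditional on surviving past $i$ the resolution is NO with certainty, so the honest report is $p^\star=\delta$ on $[0,i]$ and $0$ on $[i,T]$ --- a step function that is, instant by instant, exactly one's current conditional probability, hence the unique optimum under any strictly proper time-integrated rule.

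Next I would write $E[PWBS(p_1)]=\delta(1-p_1)^2+(1-\delta)(i/T)\,p_1^2$ for a forecaster who reports $p_1$ on $[0,i]$ and the honest $0$ afterwards (the algebra mirrors the preceding proof: the participation-rate factor collapses the first term's normalization to $1$ and shrinks the second by $i/T$), differentiate, and obtain the minimizer $p_1^\star=\frac{\delta}{\,\delta+(1-\delta)(i/T)\,}$. Since $i<T$, one checks at once that $p_1^\star>\delta=p^\star$, and because $p^\star>\tfrac12$ this is strictly \emph{further from} $\tfrac12$: the $PWBS$-optimal early report is a genuine extremization of the accurate forecast. For Scenario~1 (small effect) I would take $i/T$ close to $1$ --- an early trigger occupying most of an already short question --- so that $p_1^\star$ exceeds $p^\star$ by only a couple of percentage points; for Scenario~2 (large effect) I would take $i/T$ tiny, e.g.\ a two-week trigger on a two-year question, which drives $p_1^\star$ up toward $100\%$ while $p^\star$ stays moderate (with $\delta=0.7$ and $i/T$ small, $p_1^\star$ sits well above $99\%$, rounding to $100\%$). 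The mirror template --- early \emph{NO} with probability $\delta>\tfrac12$, eventual YES otherwise --- gives, by the same computation, extremization of an accurate sub-$50\%$ forecast downward toward $0\%$.

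I would then close the argument against proper rules: for a strictly proper scoring rule the time-integrated expected score is minimized, at each instant, uniquely by reporting the current conditional probability, so the honest step function $(p^\star,0)$ is its unique optimum on these questions, whereas the $PWBS$ optimum is the strictly more extreme $(p_1^\star,0)$. Hence $PWBS$ does not merely fail to be proper on these questions --- it fails \emph{in the direction of extremization} --- and, exactly as in Scenario~2, the magnitude of this spurious reward is governed by $i/T$ and can be pushed arbitrarily close to a full push to $100\%$ by shrinking $i/T$.

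The step I expect to be the main obstacle is not any single computation but choosing the template so that the distortion is unambiguously an extremization. In a generic early-resolution question the post-interval honest report $p_2$ is interior, and under $PWBS$ it too is distorted --- toward the later outcome --- so one would have to jointly optimize $(p_1,p_2)$ and argue that both move \emph{outward}, i.e.\ in opposite directions from $\tfrac12$, to earn the word ``extremization'' cleanly; that is doable but fiddly. The degenerate-tail template above sidesteps it: surviving past $i$ pins the resolution, so the honest $p_2$ is already an endpoint and cannot be extremized further, which isolates the early report as the sole locus of distortion and makes the comparison with a proper rule transparent. The only remaining care is in choosing numbers that make ``small'' and ``large'' rhetorically convincing, which the closed form $p_1^\star=\delta/(\delta+(1-\delta)(i/T))$ renders routine.
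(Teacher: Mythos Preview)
Your argument is correct and in fact coincides with the paper's \emph{second} scenario: there too the authors take a question with true probability $60\%$ in a first interval and $0\%$ thereafter, and find (exactly as your closed form $p_1^\star=\delta/(\delta+(1-\delta)(i/T))$ gives with $\delta=0.6$, $i/T=1/2$) that the $PWBS$-optimal early report is $75\%$. What you do differently is use a single parametric template and vary $i/T$ to manufacture both the small- and large-effect examples, which is cleaner and makes the dependence on the horizon explicit. The paper's \emph{first} scenario, by contrast, is structurally different: a five-day question with a $25\%$ daily hazard, honest trajectory $(0.684,\,0.578,\,0.438,\,0.250,\,0)$, against which a hand-picked extremized trajectory $(0.75,\,0.65,\,0.35,\,0.20,\,0)$ --- pushing each entry toward its nearer endpoint, some up and some down --- is shown numerically to score better under $PWBS$ ($0.182$ vs.\ $0.193$) but worse under the proper variant ($0.160$ vs.\ $0.154$). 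That construction is messier than yours, but it buys a closer match to what Tetlock actually means by ``extremization'': a fixed outward push applied to an entire forecast time series, simultaneously moving above-$50\%$ reports up and below-$50\%$ reports down, rather than a single early report distorted in one direction (with the mirror case argued separately).
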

\begin{proof}
Consider an event which for the next 5 days, has a 25\% probability of happening at noon each day, provided it hasn't happened before. Then the probability of it happening at least once, given that it hasn't happened before, is:

\begin{center}
\begin{tabular}{ |c|c|c|c| } 
 \hline
  Point & Point in time & Probability & Expression \\ 
 \hline
 \hline
 $S_1$ & Before the 1st day & .68359375 & $1-(1-0.25)^4$ \\ 
 $S_2$ & After the 1st day & .578125 & $1-(1-0.25)^3$ \\ 
 $S_3$ & After the 2nd day & .4375 & $1-(1-0.25)^2$ \\ 
 $S_4$ & After the 3rd day & 0.25 & $1-(1-0.25)^1$ \\ 
 $S5$ & After the 4th day & 0 & $1-(1-0.25)^0$ \\ 
 \hline
\end{tabular}
\end{center}

Let's ``extremize" these probabilities

\begin{center}
\begin{tabular}{ |c|c|c| } 
 \hline
 Point & Point in time & Extremized probabilities \\
 \hline
 \hline
 $S_1$ & Before the 1st day & .75 \\
 $S_2$ & After the 1st day & .65 \\
 $S_3$ & After the 2nd day & .35  \\
 $S_4$ & After the 3rd day & 0.2 \\
 $S_5$ & After the 4th day & 0 \\
 \hline
\end{tabular}
\end{center}

Per calculations in appendix \S\ref{Extremization calculations}, the expected ``participation-rate weighted Brier score'', E[PWBS] is, respectively:

\begin{equation}
    \begin{split}
        E[PWBS(\textnormal{honest probabilities)}] = 0.193 \\
        E[PWBS(\textnormal{extremized probabilities)}] = 0.182
    \end{split}
\end{equation}

Conversely, the order is reversed if we use a proper version of the PWBS, which we define in \S\ref{Do not truncate}

\begin{equation}
    \begin{split}
        E[\textit{Proper } PWBS(\textnormal{honest probabilities)}] = 0.154 \\
        E[\textit{Proper } PWBS(\textnormal{extremized probabilities)}] = 0.160
    \end{split}
\end{equation}

\end{proof}

We note, that the effect in the example is small, on the order of less than $10\%$ of the score\footnote{In this example, the extremizer gets a $(0.193-0.182)/0.182=0.06=6\%$ advantage when it should have been getting a $(0.160-0.154)/0.16 = 0.0375=3.75\%$ disadvantage}. And this seems to be the case for questions in which the forecast decreases continuously with time. 

Consider a second scenario, in which the probability of an event is 60\% during the first half of a question, and 0\% afterwards. Then the expected participation-weighted Brier score for predicting a probability of $x$ during the first half and $0$ during the second half is given by

\begin{equation}
    PWBS(x) = 0.6 \cdot (1-x)^2 + 0.4 \cdot \left(\frac{1}{2} \cdot x^2 + \frac{1}{2} \cdot 0\right)
\end{equation}

The minimum is given when $x=0.75 = 75\%$, where the PWBS is $0.15$, in contrast with an expected PWBS of $0.168$ when revealing the true probability of $60\%$. If the scoring rule had been proper, as in, a 75\% probability would get a score of $0.2625$, whereas a 60\% probability would have gotten a brier score of $0.24$. This is a slightly larger effect, on the order of 20\% of the score\footnote{$PWBS(0.75) = 0.6 \cdot (1-0.75)^2 + 0.4 \cdot 0.5 \cdot 0.75^2
= 0.15$. $PWBS(0.6) = 0.6 \cdot (1-0.6)^2 + 0.4 \cdot 0.5 \cdot0.6 ^2 = 0.168$. 
The difference is $(0.168-0.15)/0.168 = 0.1075 = 10.75\%$
If the participation Brier score had been proper, $BS(0.6) = 0.6*(1-0.6)^2 + 0.4*(0-0.6)^2 = 0.24$, $BS(0.75) = 0.6*(1-0.75)^2 + 0.4*(0-0.75)^2 = 0.2625$, and the difference is $(0.2625-0.24)/0.24 = 0.0225 = 0.09375 = 9.375\%$. So the extremizer got a $10.75\%$ advantage when it should have been getting a $9.375\%$ disadvantage, for a total difference of $9.375 + 10.75 = 20.125\%$. Percentage differences of the Brier score seems like a fairly arbitrary unit.}.

At this point, we have two competing considerations. On the one hand, per \cite{gjfirst}, the median of Good Judgment Project superforecasters' forecasts in the original IARPA ACE competiton ``was 35\% to 72\% more accurate than any other research team in the competition''. This suggests that, even if the original Good Judgment Project gained an illegitimate advantage of $10\%$ to $20\%$ over competitors by identifying and exploiting an improper scoring rule, it would still have won otherwise. 

Further, per \cite{gjscience}, the Good Judgment Project's advantage is due to multiple factors: Selection of forecasters with suitable mental traits, de-biasing training, teaming, and aggregation algorithms.

\begin{figure}[h]
    \includegraphics[width=12cm]{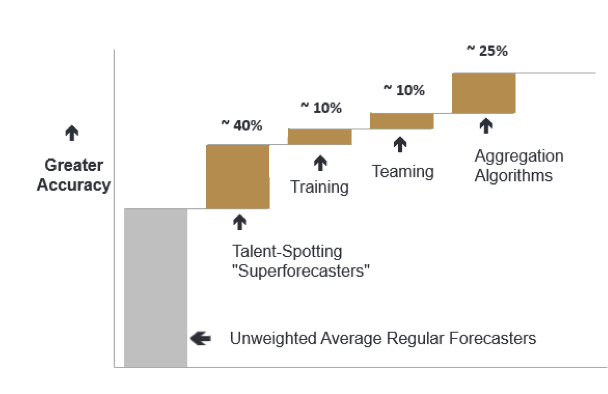}
    \centering
    \caption{``Talent-Spotting, Training, Teaming and Aggregation: Our Evidence-Based Formula for Enhanced Accuracy''. Taken from \cite{gjscience}.}
\end{figure}

This also suggests that, in the aggregate, the techniques and skills identified by the Good Judgment Project and their superforecasters would still be valuable even if they are 10\% or 20\% worse than previously considered. 

On the other hand, given strong optimization pressure, and in particular data analysis to find optimal aggregation algorithms to optimize an improper scoring rule, we would expect many patterns to be found and exploited. And this would include patterns that take advantage of the difference between the proper and improper scoring rules. Or, in other words, patterns which improve the score under the improper scoring rule no matter at what expense to true accuracy. For instance, in our last example scenario, we found that manipulating the probability from 60\% to 75\% would improve the PWBS score by 12\%. This is a small improvement in the score which comes at the expense of a relatively large manipulation in base percentage points. 


We contacted Good Judgment Inc, and they revealed that, per a previously unpublished paper, (\cite{Karvetski}), they were already aware of the incentive problems with participation weighted Brier scores. They pointed this out to IARPA during the Aggregative Contingent Estimation (ACE) Program, but other participants voted not to correct the flaw. Further, the in-house forecasting platform which Superforecasters use does have a proper scoring rule. 

\subsection{Reputational rewards may outweigh internal platform rewards}
If the signalling value a predictor obtains from a high rank outweighs the difference in expected monetary reward between strategies, then probabilistic rewards will not prevent the incentive to extremize in order to be ranked as the "best predictor" in a competition. As many competitions run on many platforms, the risk of performing badly in a competition is unlikely to solve this problem alone, as predictors can select which results they publicise.


\section{Solutions}
\label{Solutions}

\subsection{Use probabilistic rewards}
\label{Probabilistic rewards}

In tournaments with a fairly small numbers of questions, where paying only the top few places would incentivize forecasters to make overconfident
predictions to maximize their chance of first-place finishes as discussed above, probabilistic rewards may be used to mitigate this effect.

In this case, rather than e.g. having prizes for the top three scorers, prizes would be distributed  according to a lottery, where the number of tickets each player received was some function of their score, thereby incentivizing players to maximize their expected score, rather than their chance of scoring highest.

Which precise function should be used is a non-trivial question: If the payout structure is too ``flat'' i.e. making a much better prediction than average does not get you many more tickets, then there is not sufficient incentive for people to work hard on their forecasts compared to just entering with the community median or some reasonable prior. If on the other hand the payout structure too heavily rewards beating the crowd forecast, then the original problem returns.

\subsection{Give rewards to the best forecasters among many questions.}
\label{Many questions}

If one gives rewards to the top three forecasters for 10 questions in a contest in which there are 200 forecasters, the top three forecasters might be as much a function of luck as of skill, which might be perceived as unfair. In fact, as demonstrated above, there are situations where more skill (in the sense of being closer to the "true" probability) is associated with a \emph{lower} chance of success (being in the top few.) As we discussed in previous sections, giving prizes for much larger pools of questions makes this effect smaller, allowing for skill to dominate luck of the draw effects. 

\subsection{Force forecasters to forecast on all questions}
\label{Forcing}

This fixes the incentive to pick easier questions to forecast on. A similar idea would be to assume that forecasters have predicted the community median on any question that they haven't forecast on, until they make their own prediction, and then reporting the average brier score over all questions. This has the disadvantage of not rewarding ``first movers/market makers", however it has the advantage of ``punishing" people for not correcting a bad community median in a way that the relative Brier score doesn't. It also removes the incentive to manually forecast on all questions.

\subsection{Score groups}
\label{Scoring groups}

If one selects a group of forecasters and offers to reward them in proportion to the Brier score of the group's predictions for a fixed set
of questions, then the forecasters now have the incentive to share information with the group. This group of forecasters could be pre-selected for having made good predictions in the past.

\subsection{Anonymize winners}
If reputational rewards from placing in the top few outweigh monetary rewards, and forecasters are able to publicize parts of their track record, the incentive to distort forecasts in order to be in the top few returns. This is not a problem if winners are not known. On the other hand, monetary rewards might have to be increased in the absence of reputational rewards.

\subsection{Divide information gatherers and prediction producers}
\label{Divide info gatherers}

If this is done, information gatherers might then be upvoted by prediction producers, who would have less of a disincentive not to do so, though hiding good comments from other forecasters would still incentivised. Alternatively, some prediction producers might be shown information from different information gatherers, or select which information was responsible for a particular change in their forecast. A scheme in which the two tasks are separated might also lead to efficiency gains.

\subsection{Do not truncate the participation rate}
\label{Do not truncate}
If one does not truncate the participation rate, Good Judgement Open's score becomes proper. That is, if a question resolves when it is $n\%$ towards completion, the forecasters' Brier score should be multiplied by $n\%$, not by $100\%$. Formally, if a question would otherwise have been left open until $t_3$, the weighting should be as

\begin{equation}\label{properalternative}
    \frac{1}{t_3-t_0} \cdot \int_{t_1}^{t_2} Brier(P(t)) dt
\end{equation}

\begin{proofsketch}
The integral is the limit of having a question for each period of time $h$, each of which is weighted by $\frac{h}{t_3-t_0}$, and scoring each separately. That is, the integral is similar to having a question for each day and that question being scored with a proper Brier score, a setup which would be proper.

But because $g(t) = Brier(P(t))$ is Riemann integrable, the limit of the sums is as we expect:

\begin{equation}
    \lim_{n\to\infty} \sum_{k=0}^{n} g\left(t_1 + \frac{k}{n} \cdot t_2\right) = \int_{t_1}^{t_2} g(t) dt
\end{equation}

Thus, if a series of P(t) minimize the discrete case, they also minimize the continuous case. Because the minimum of the discrete case is achieved when the forecasters honestly reports their probabilities, that will be the minimum for the continuous case as well. The scoring rule in (\ref{properalternative}) is thus proper. 
\end{proofsketch}

One consequence of not truncating the participation rate is that questions which resolve early end up being worth less to the user. This can cause confusion, especially with additive scoring rules such as those used on the Metaculus platform, which correctly does not truncate their scores. New users express surprise frequently enough at participation rate truncation that an FAQ entry on this exact topic was recently added on the \cite{FAQ} platform.

\subsection{Design collaborative scoring rules}
\label{Collaborative scoring rules}

Tournament organizers could seek to use collaborative scoring rules, where forecasters are incentivized to share information with each other. We define a proper collaborative scoring rule as one where the highest expected reward is obtained if a user maximises both their own score and the aggregate score. If forecasting tournament organizers are able to produce a prior, and $Brier$ denotes the Brier score, the following is a proper collaborative scoring rule:

\begin{equation}
    S = \alpha \cdot Brier(\textnormal{forecaster}) + \beta \cdot Brier(\textnormal{aggregate}) - \gamma \cdot Brier(\textnormal{prior})
\end{equation}

The proof is left to the reader. One might also seek to design other collaborative scoring rules which attempt to approximate the Shapley value of each forecaster. Note that for this scoring rule to be proper, S must directly be the reward the forecaster obtains. If S is used as a score, with that score determining how some finite prize is shared, the incentive to collaborate disappears. Unfortunately, making S directly be the reward means that the size of the prizepool for a tournament cannot be decided in advance, which makes this system somewhat unattractive for funders.

Currently, some platforms make it possible to give upvotes to the most insightful forecasters, but if upvotes were monetarily rewarded, one might not have the incentive to upvote other participants' contributions as opposed to waiting for one's contributions to be upvoted. Insightful comments can also be externally judged and incentivised. A reward system the authors intend to field test in the near future involves comments that are judged to be useful being rewarded with a multiplier on question scores, therefore ensuring that helpful sounding comments are not rewarded if accompanied by poor forecasts.

In practice, Metaculus and Good Judgment Open do have healthy communities which collaborate, where trying to maximize accuracy at the expense of other forecasters is frowned upon, but this might change with time, and it might not always be replicable. In the specific case of Metaculus, monetary prizes are relatively new, but becoming more frequent. It remains to be seen whether this will change the community dynamic.

\section{Conclusions}
\label{Conclusion}

We presented some incentive problems which currently befall various forecasting platforms and contests. These incentive problems are worrying because the probabilities which these platforms and contests produced generally aim to be action-guiding, but forecasters have an incentive to input distorted probabilities. Following \cite{lagerros}, these incentives can either manifest themselves because forecasters consciously choose to maximize them, because the forecasters who happen to follow them rise to the top, or because forecasters internalize them in their learning processes when interpreting flawed scoring mechanisms as feedback.

Further, improper scoring rules are then used to grade and reward forecasters. In the case of Good Judgment Open, superforecasters are selected amongst the best performers, but this is done according to a distorted, improper scoring rule. We also find that ``extremizing"---pulling aggregate probabilities towards the nearest of 0\% and 100\%---is in some cases incentivized by Good Judgment Open's improper scoring rule, and thus some of Tetlock et al.'s results about extremization might not hold if they were discovered while using a proper forecasting rule. However, we are unsure about the size of this effect, and subjectively estimate it to be small.

We also pointed out that common scoring rules incentivize forecasters not to share information, and that that tournaments with discrete prizes generate an incentive to distort forecasts to maximize the chances of winning, rather than to maximize expected loss.


For the case of aligning human forecasters, solving the alignment problem would have required using a truly proper scoring rule, making it incentivize cooperation, etc. But the most prominent forecasting platforms and competitions have, to a certain extent, failed to do so. We can relate this problems with the broader alignment problem present in machine learning systems. In particular, we notice that the incentive problems in our forecasting systems resemble machine learning specification gaming examples, such as those outlined in \cite{Krakovna}, whereas others are more like principal-agent problems, which bear some resemblance to the inner alignment problems in \cite{Hubinger}.

Perhaps incentive design problems in forecasting systems could be solved or mitigated using tools from machine learning or artificial intelligence alignment. Conversely, perhaps incentivizing creative human forecasters seeking prestige and monetary rewards to produce useful forecasts might serve as a toy problem or test ground for machine learning researchers seeking to align potentially very powerful artificial systems. Lastly, the fact that the ``forecasting alignment problem'' hasn't been solved might provide an opportunity to test our alignment capabilities in preparation for the harder problem of aligning hypothetical future machines more intelligent than humans.




\appendix








\begin{thebibliography}{00}

\bibitem[CSET-Foretell (2020)]{SouthChina}
    CSET-Foretell (2020)
    ``Will the Chinese military or other maritime security forces fire upon another country's civil or military vessel in the South China Sea between January 1 and June 30, 2021, inclusive?''
    URL: https://web.archive.org/web/20201031221709/https://goodjudgment.io/ superforecasts/\#1338

\bibitem[Enten (2017)]{Fake polls real problem}
    Enten, H. (2017)
    ``Fake Polls Are A Real Problem"
    URL: https://fivethirtyeight.com/features/fake-polls-are-a-real-problem/

\bibitem[Friedman (2001)]{friedman1}
    Friedman, D. (2001)
   \textit{Law's Order: What Economics Has to Do with Law and Why It Matters}

\bibitem[Friedman (1990)]{friedman2}
    Friedman, D. (1990)
    \textit{Price Theory: An Intermediate Text}.

\bibitem[Good Judgement (2018)]{gjscience}
    Good Judgment (2018)
    ``The Science of Superforecasting''
    Archived URL: https://web.archive.org/web/20180408044422/http://goodjudgment.com/science.html
    
\bibitem[Good Judgment Scoring Rule (2019)]{GJSR}
    Good Judgment (2019) 
    ``4. How are my forecasts scored for accuracy?''
    URL: https://www.gjopen.com/faq\#faq4

\bibitem[Good Judgement (2020a)]{covidrecovery}
    Good Judgment (2020a)
    ``Public Dashboard''
    URL: https://goodjudgment.com/covidrecovery/
    archived URL: https://web.archive.org/web/20201120231552/ https://goodjudgment.com/covidrecovery/

\bibitem[Good Judgement (2020b)]{USelections}
    Good Judgment (2020b)
    ``Who will win the 2020 United States presidential election?''
    URL: https://web.archive.org/web/20201031221709/ https://goodjudgment.io/superforecasts/\#1338

\bibitem[Good Judgement (2020c)]{gjfirst}
    Good Judgment (2020c)
    ``The First Championship Season''
    Archived URL:
    https://web.archive.org/web/20201127110425/https://goodjudgment
    .com/resources/the-superforecasters-track-record/the-first-championship-season

\bibitem[Hubinger et al. (2019)]{Hubinger}
    Hubinger, E. et al. (2019)    
    ``Risks from Learned Optimization in Advanced Machine Learning Systems"
    URL: https://arxiv.org/abs/1906.01820

\bibitem[Krakovna et al. (2020)]{Krakovna}
    Krakovna V. et al. (2020)
    ``Specification gaming: the flip side of AI ingenuity"
    URL: https://deepmind.com/blog/article/Specification-gaming-the-flip-side-of-AI-ingenuity/
    Specification gaming examples in AI - master list: https://docs.google.com/spreadsheets/d/e/2PACX-1vRPiprOaC3HsCf5Tuum8bRfzYUiKLRqJmbOoC-32JorNdfyTiRRsR7Ea5eWtvsWzuxo8bjOxCG84dAg/pubhtml
    
\bibitem[Karvetski et al. (s.a)]{Karvetski}
    Karvetski C.,  Minto T., Twardy, C.R. (s.a.)
    ``Proper scoring of contingent stopping questions".
    Unpublished.
    
\bibitem[Lagerros (2019)]{lagerros}
    Lagerros, J. (2019)
    ``Unconscious Economics''
    URL: https://www.lesswrong.com/posts/PrCmeuBPC4XLDQz8C/unconscious-economics

\bibitem[Lichtendahl et al. (2007)]{Lichtendahl}
    Lichtendahl et al.
    ``Probability Elicitation, Scoring Rules, and Competition Among Forecasters''
    Management Science, Vol. 53. N. 11. 
    URL: https://pubsonline.informs.org/doi/abs/10.1287/mnsc.1070.0729?journalCode=mnsc
    
\bibitem[Metaculus (2018)]{ragnarok}
    Metaculus (2018)
    ``Ragnarök Question Series''
    URL: https://www.metaculus.com/questions/1506/ragnar\%25C3\%25B6k-question-series-overview-and-upcoming-questions/

\bibitem[Metaculus (2020)]{aiprogress}
    Metaculus (2020)
    ``Forecasting AI Progress''
    URL: https://www.metaculus.com/questions/1506/ragnar\%25C3\%25B6k-question-series-overview-and-upcoming-questions/
    
\bibitem[Metaculus (2021)]{FAQ}
    Metaculus (2021)
    ``FAQ''
    URL: https://www.metaculus.com/help/faq/\#fewpoints

\bibitem[Rodriguez (2019)]{Rodriguez}
    Rodriguez, L. (2019) 
    ``How many people would be killed as a direct result of a US-Russia nuclear exchange?''
    URL: https://forum.effectivealtruism.org/posts/FfxrwBdBDCg9YTh69/how-many-people-would-be-killed-as-a-direct-result-of-a-us

\bibitem[Tetlock et al. (2015)]{Tetlock} 
    Tetlock, P. \& Gardner, D. (2015) \textit{Superforecasting: The Art and Science of Prediction.} 

\bibitem[Witkowski et al. (2021)]{Witkowski}
    Witkowski J. et al. (2021) ``Incentive-Compatible Forecasting Competitions"
    URL: https://arxiv.org/abs/2101.01816v1

\bibitem[Yeargain (2017)]{Yeargain}
    Yeargain, T. (2020)
    Missouri Law Review, Vol. 85, Issue 1.
    ``Fake Polls, Real Consequences: The Rise of Fake Polls and the Case for Criminal Liability Case for Criminal Liability" (pp. 140-150).
    URL: https://scholarship.law.missouri.edu/cgi/viewcontent.cgi?article=4418
    \&context=mlr 
\end{thebibliography}

\bibliographystyle{plain}

\newpage
\begin{appendices}

\section{Numerical Simulations}\label{Simulations}

\subsection{Method used in the main body of the paper}
To quantify the optimal amount of distortion, we simulate a tournament many times, and observe the results. A tournament is made out of questions and users. 

We model questions as logistic distributions, with a mean of 0, and a standard deviation itself drawn from a logistic distribution of mean 20 and standard deviation 2. For instance, a question might be a logistic distribution of mean 0 and standard deviation 15. At question resolution time, a point is randomly drawn from the logistic distribution. The code to represent this looks roughly as follows:

\begin{verbatim}
generateQuestion = function(meanOfTheStandardDeviation,
    standardDeviationOfTheMean){
  mean <- 0
  sd <- randomDrawFromLogistic(meanOfTheStandardDeviation,
    standardDeviationOfTheMean)
  questionResult <- randomDrawFromLogistic(mean, sd)
  question <- c(mean, sd, questionResult)
  return(question)
}
\end{verbatim}

Users attempt to guess the mean and standard deviation of each question, and each guess has some error. The code to represent this looks roughly as follows:

\begin{verbatim}
generateUser = function(meanOfTheMean, standardErrorOfTheMean,
    meanOfTheStandardDeviation, standardErrorOfTheStandardDeviation){
  user <- function(question){
    questionMean <- question[1]
    questionSd <- question[2]
    questionResolution <- question[3]
    questionMeanGuessedByUser <- questionMean + 
        randomDrawFromLogistic(meanOfTheMean, standardErrorOfTheMean)
    questionSdGuessedByUser <- questionSd +
        randomDrawFromLogistic(meanOfTheStandardDeviation, standardErrorOfTheStandardDeviation))
    probabilityDensityOfResolutionGuessedByUser <-
        getLogisticDensityAtPoint(questionResolution, 
            questionMeanGuessedByUser, questionSdGuessedByUser)
    return(probabilityDensityOfResolutionGuessedByUser)
  }
  return(user)
}
\end{verbatim}

We model the average user as having 
\begin{itemize}
    \item \verb|meanOfTheMean=5|
    \item \verb|standardErrorOfTheMean=5|
    \item \verb|meanOfTheStandardDeviation=5|
    \item\verb|standardErrorOfTheStandardDeviation=5|. 
\end{itemize}

We then consider a "perfect predictor"---a user who knows what the mean and the standard deviation of a question are---and consider how much that perfect predictor would want to distort her own guess to maximize her chances of placing in the top 3 or users. More details can be found in the \href{https://github.com/NunoSempere/Online-Appendix-to-Incentive-Problems-In-Forecasting-Tournaments}{Online Appendix} accompanying this paper.

\subsection{Simluations with more complex distributions of players}
\subsubsection{For binary questions}
A binary question elicits a probability from 0 to 100\%, and is resolved as either true or false. They exist in all three platforms we consider (Metaculus, Good Judgment Open and CSET). 

For the binary case, we first consider a simulated tournament with 10 questions, each with a `true' binary probability between 0 and 100\%. We also consider the following types of forecasters:

\begin{enumerate}
    \item Highly skilled predictors: 10 predictors predict a single probability on each question. Their predictions are off from the "true" binary probability by anywhere from -0.5 to 0.5 bits
    \item Unsophisticated extremizers: 10 highly skilled predictors (whose predictions are off from the "true" binary probability by anywhere from -0.5 to 0.5 bits) who extremize their probabilities by 0.3 bits
    \item Sophisticated extremizers: 5 highly skilled predictors (whose predictions are off from the "true" binary probability by anywhere from -0.5 to 0.5 bits) who take the question closest to 50\% and randomly move it to either 0\% or 100\%
    \item Unskilled predictors: 10 predictors predict a single probability on each question. Their predictions are off from the "true" binary probability by anywhere from -2 to 2 bits
\end{enumerate}

We ran the tournament 10,000 times. We find that sophisticated extremizers do best, followed by lucky unskilled predictors, followed by unsophisticated extremizers, followed by honest highly skilled predictors. 

\newpage

\begin{figure}[h!]
    \includegraphics[width=10cm]{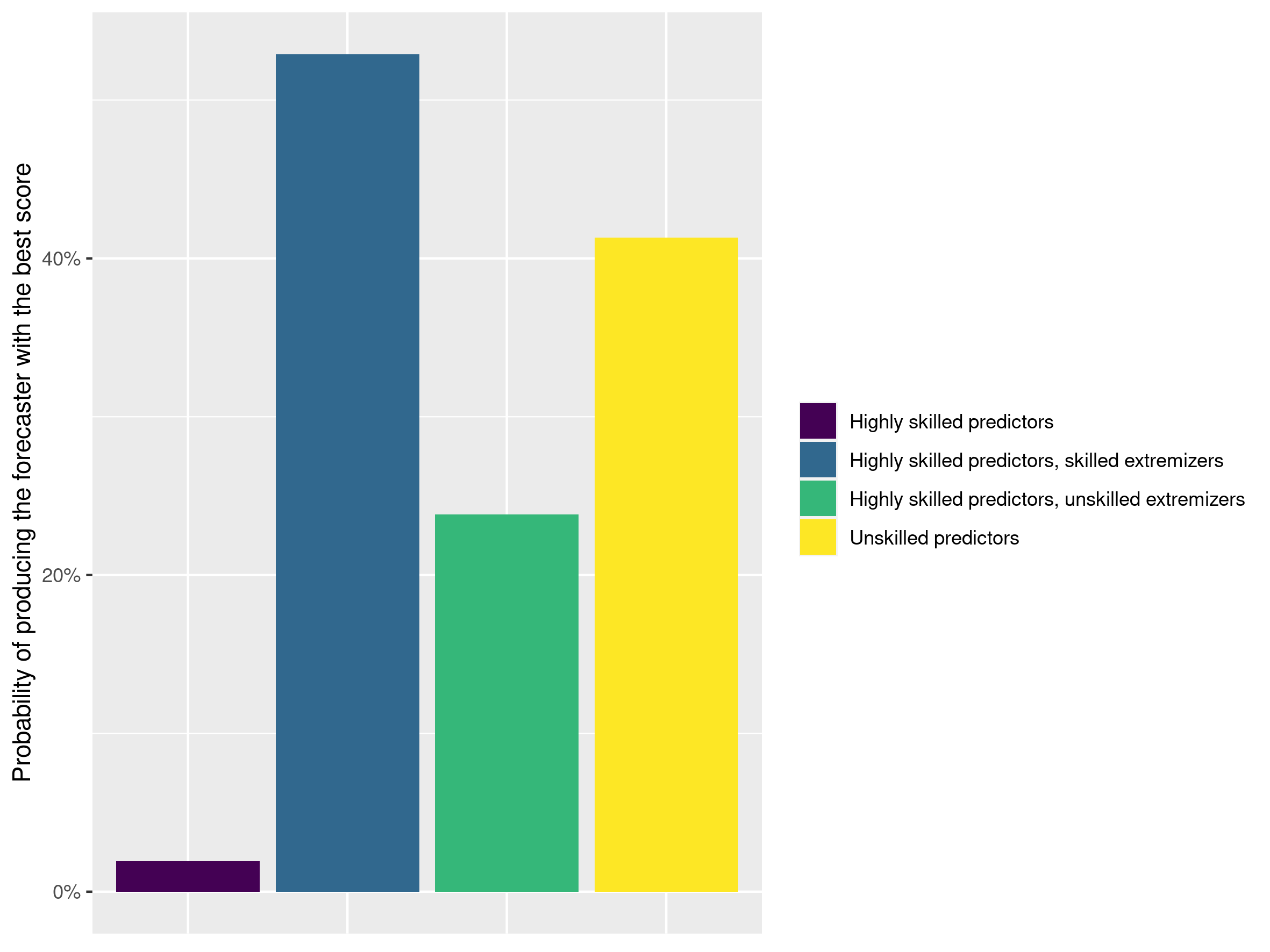}
    \centering
    \caption{\% of the time different predictors reach the top 5}
\end{figure}

\begin{figure}[h!]
    \includegraphics[width=10cm]{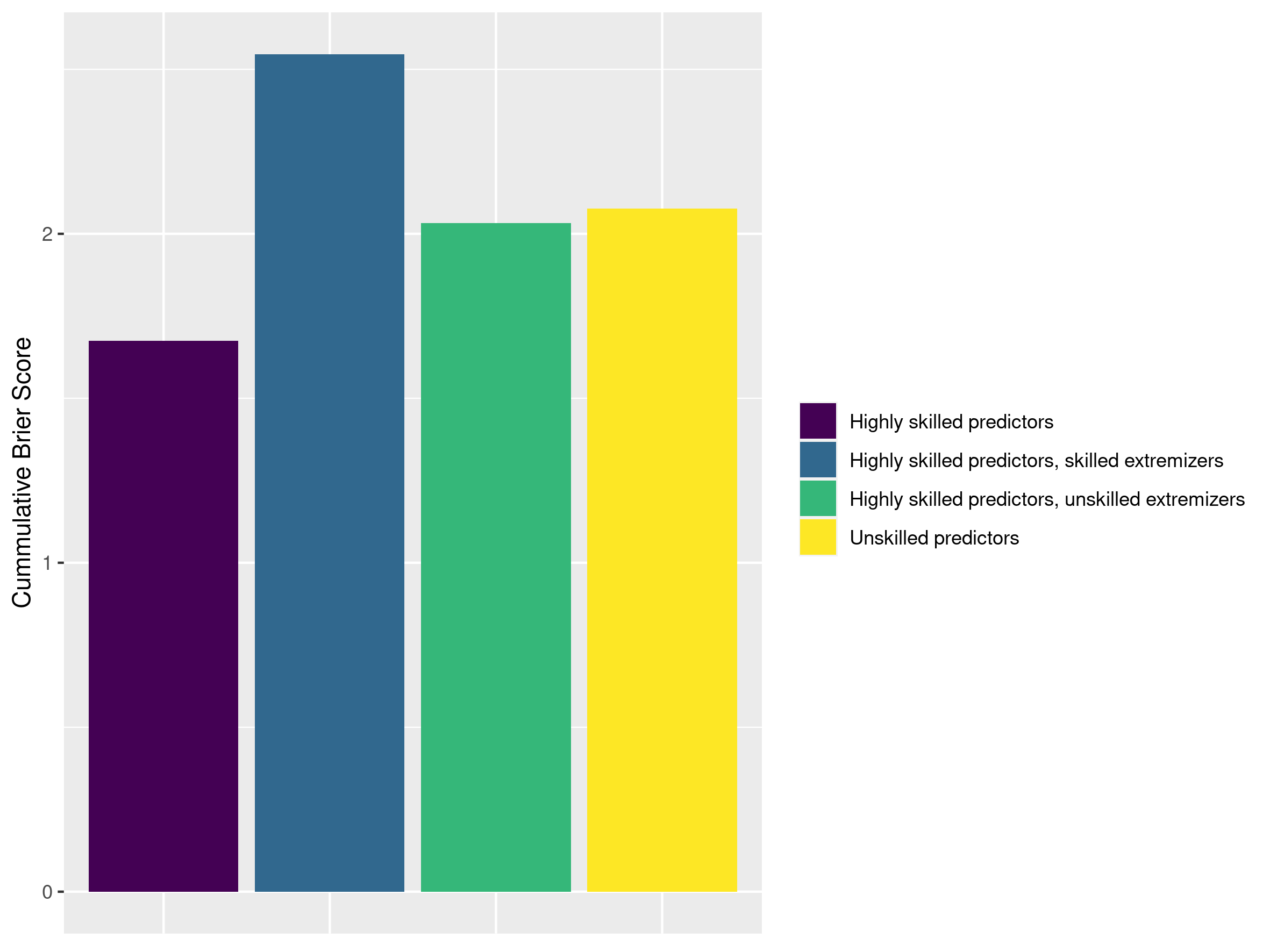}
    \centering
    \caption{Mean Brier score for each group (lower is better)}
\end{figure}
\newpage

We can also explore how this changes with the number of questions. In this case, we ran 1000 simulations for each possible number of questions, in increments of five. Results were as follows:
\\
\begin{figure}[h!]
    \includegraphics[width=\textwidth]{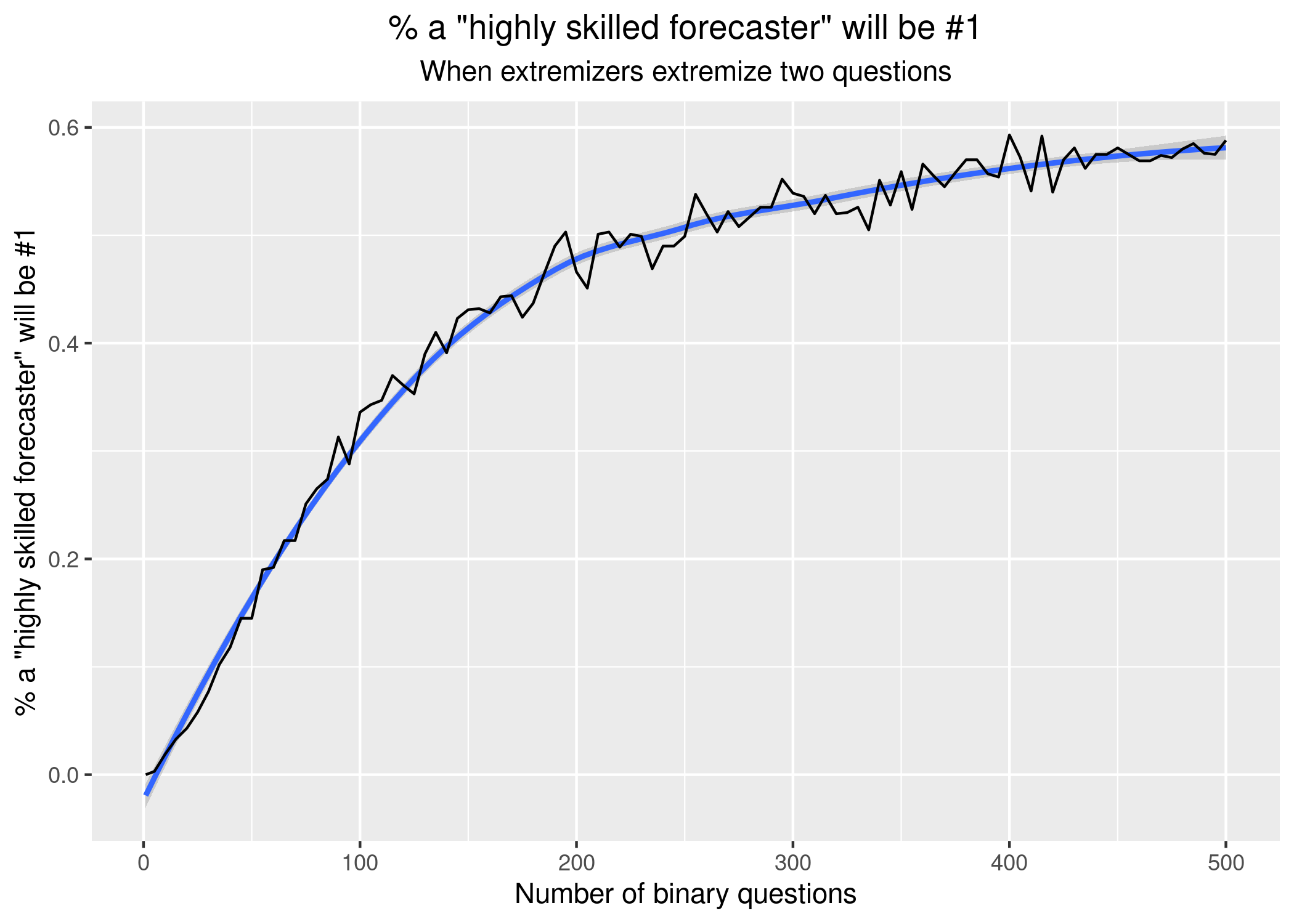}
    \centering
    \caption{Probability that a ``highly skilled predictor'' will obtain the lowest Brier score, for a tournament with $n$ binary questions}
\end{figure}

The results would further vary in terms of how many questions the forecasters who extremize choose to extremize. Dynamic selection of number of questions to extremize based on the total number of questions could provide further opportunities for improved exploitation, but this was not explored.  
\newpage
\subsubsection{For continuous questions}

As before, a continuous question elicits a probability distribution, and is resolved as a resolution value, where the different participants are rewarded in proportion to the probability density they assigned to that resolution value. They exist only in Metaculus (and on other experimental platforms, like foretold.io)

We first ran 20,000 simulations of a tournament with 100 participants and 30 questions. Results for all 30 questions were drawn from a logistic distribution of mean 50 and standard deviation 10. 

The participants were divided into:

\begin{enumerate}
    \item One perfect predictor, which predicts a logistic with mean 50 and sd 10, i.e., the true underlying distribution. Represented in green.
    \item 10 highly skilled predictors which are somewhat wrong, at various levels of systematic over or underconfidence. They predict a single logistic on each question with mean chosen randomly from 45-55 and standard deviation ranging from \{4,6 ...20,22\}. Represented in orange.
    \item 10 highly skilled predictors, trying to stand out by extremizing some of their forecast distributions. They predict a single logistic on each question with mean chosen randomly from 45-55 and standard deviation 10 for 25 of the questions and standard deviation 5 for a (randomly selected) other 5. Represented in light greenish brown. 
    \item 20 unskilled predictors. They predict a single logistic with means for each question chosen randomly from 35-65 and standard deviation 10. Represented in light blue. 
    \item 20 unskilled, overconfident predictors. They predict a single logistic with means for each question chosen randomly from 35-65 and SD 5. Represented in dark blue.
    \item 39 unskilled, underconfident predictors. They predict a single logistic with means for each question chosen randomly from 35-65 and SD 20. Represented in pink. 
\end{enumerate}

Scoring was according to the Metaculus score formula, though we used the mean of the scores instead of the score of the mean to simplify the process. 

On such a tournament, it would be plausible for only the top 5 forecasters to be rewarded, so we present the probabilities of being in the top 5 for each group.

We find that those who change the position of the mean slightly from the true distribution (i.e., groups 3, 4 and, to a lesser extent, 2), gain a 5\% absolute advantage in terms of getting into the top 5, and a 50\% relative advantage. Each forecaster who does this is $\approx 15\%$ likely to be successful in reaching the top 5. In contrast, the perfect predictor only has a $\approx 10\%$ chance of doing so. 

This can be explained by understanding that a forecaster who wants to reach a discrete ceiling faces a trade-off between the mean of the expected score, and the variance of that score. If the goal is to reach a threshold, increasing the variance at the expense of the mean turns out to in this case be worth it. Graphs which showcase this effect follow. 
\\
\begin{figure}[h!]
    \includegraphics[width=\textwidth]{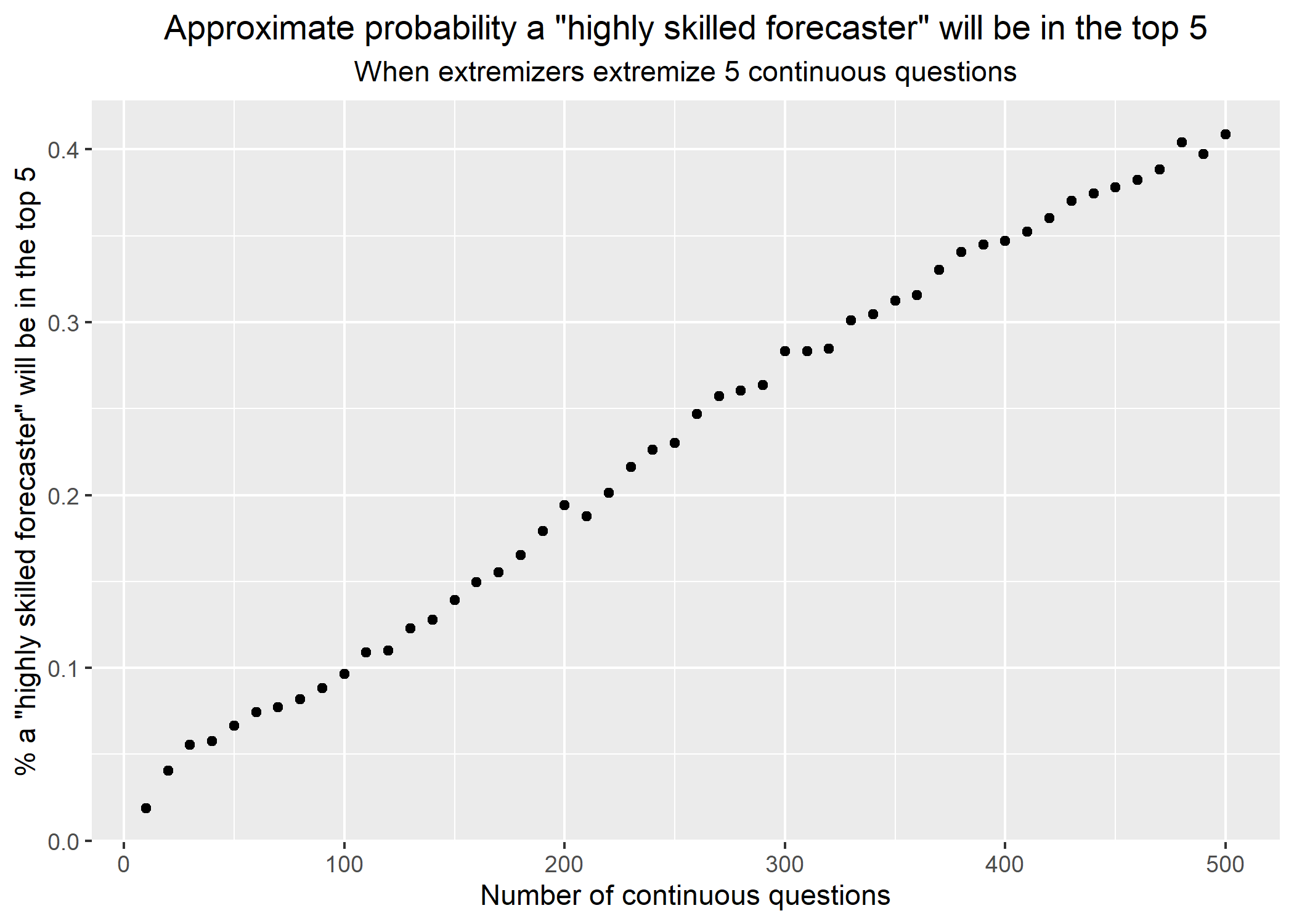}
    \centering
    \caption{Approximate probability (understood as ``frequency during simulations'') that a ``highly skilled forecaster'' will obtain the lowest Brier score, for a tournament with $n$ continuous questions}
\end{figure}





%
%
%
%
%
%

\section{Extremization calculations}
\label{Extremization calculations}

The expected ``participation-rate weighted Brier score'' for a question which closes on the nth day if the event happens, and or in the 5th day if the event hasn't happened by then, will be:

\begin{equation}
\begin{split}
    E[PWBS] &= 0.25 \cdot PWBS(\textnormal{The event happens in the first day}) \\
    &+ 0.75\cdot 0.25 \cdot PWBS(\textnormal{The event happens in the second day}) \\
    &+ 0.75^2\cdot 0.25 \cdot PWBS(\textnormal{The event happens in the third day}) \\
    &+ 0.75^3\cdot 0.25 \cdot PWBS(\textnormal{The event happens in the fourth day}) \\
    &+ 0.75^4\cdot PWBS(\textnormal{The event doesn't happen}) \\
\end{split}
\end{equation}

Now, the integral in (\ref{PWBS definition}) transforms into a simple sum, because probabilities stay constant throughout the day, so

\begin{equation}
\begin{split}
        &PWBS(\textnormal{The event happens in the nth day}) \\
        &= \frac{Brier(p(S_1)) + ... + Brier(p(S_n))}{n}
\end{split}
\end{equation}

Likewise for the event not happening, except that $Brier(p)$ will be equal to $(0-p)^2$ instead of $(1-p)^2$:

\begin{equation}
\begin{split}
        &PWBS(\textnormal{The event doesn't happen}) \\
        &= \frac{Brier(p(S_1)) + ... + Brier(p(S_4))}{4}
\end{split}
\end{equation}
And so, the 
\begin{equation}
\begin{split}
        &E[PWBS] = 0.25 \cdot (1-p(S_1))^2\\
        &+ 0.75\cdot 0.25 \cdot \frac{ (1-p(S_1))^2 + (1-p(S_2))^2}{2} \\
        &+ 0.75^2\cdot 0.25 \cdot \frac{ (1-p(S_1))^2 + (1-p(S_2))^2 + (1-p(S_3))^2 }{3} \\
        &+ 0.75^3\cdot 0.25 \cdot \frac{ (1-p(S_1))^2 + (1-p(S_2))^2 + (1-p(S_3))^2 +(1-p(S_4))^2}{4} \\
        &+ 0.75^4 \cdot \frac{ (0-p(S_1))^2 + (0-p(S_2))^2 + (0-p(S_3))^2 +(0-p(S_4))^2}{4} \\
\end{split}
\end{equation}

Given this, we can calculate the expected ``participation weighted Brier score" for the true probabilities, and the extremized probabilities.

\begin{equation}
    \begin{split}
        E[PWBS(\textnormal{honest probabilities)}] = 0.193 \\
        E[PWBS(\textnormal{extremized probabilities)}] = 0.182
    \end{split}
\end{equation}

\end{appendices}
\end{document}